\definecolor{pal1}{cmyk}{.90,.30,.0,.0}
\definecolor{pal2}{cmyk}{0.80, 0.00, 1.00, 0.00}
\definecolor{pal3}{cmyk}{0.00, 0.40, 0.25, 0.00}
\definecolor{pal4}{cmyk}{0.10,0.90,0.80,0.00}
\definecolor{palg}{HTML}{97F99F}
\definecolor{palb}{HTML}{336bc3}
\definecolor{palorange}{HTML}{FFEC8C}
\DeclareRobustCommand{\hlb}[1]{#1}
\newcommand{\Poisson}{\operatorname{Poisson}}
\newcommand{\Exponential}{\operatorname{Exponential}}
\newcommand{\Beta}{\operatorname{Be}}
\newcommand{\Uniform}{\operatorname{Uniform}}
\newcommand{\Normal}{\operatorname{Normal}}
\newcommand{\Cauchy}{\operatorname{Cauchy}}
\newcommand{\Dirichlet}{\mathcal D}
\newcommand{\Var}{\operatorname{Var}}
\newcommand{\Data}{\mathcal D}
\newcommand{\sC}{\mathcal C}
\newcommand{\Geometric}{\operatorname{Geometric}}
\newcommand{\Reals}{\mathbb R}
\newcommand{\Tree}{\mathcal T}
\newcommand{\sM}{\mathcal M}
\newcommand{\indep}{\stackrel{\text{indep}}{\sim}}
\newtheorem{theorem}{Theorem}
\newtheorem{lemma}[theorem]{Lemma}
\theoremstyle{definition}
\newtheorem{remark}{Remark}
\begin{document}

\title{\vspace{-3em} 
    Bayesian Regression Tree Ensembles that Adapt to Smoothness and Sparsity}
\author{Antonio R. Linero\thanks{Department of Statistics, Florida State
    University, Email: arlinero@stat.fsu.edu} \ and Yun
  Yang\thanks{Department of Statistics, Florida State University, Email: yyang@stat.fsu.edu}
}

\maketitle

\begin{abstract}
  Ensembles of decision trees are a useful tool for obtaining for obtaining flexible estimates of regression functions. Examples of these methods include gradient boosted decision trees, random forests, and Bayesian CART. Two potential shortcomings of tree ensembles are their lack of smoothness and vulnerability to the curse of dimensionality. We show that these issues can be overcome by instead considering sparsity inducing soft decision trees in which the decisions are treated as probabilistic. We implement this in the context of the Bayesian additive regression trees framework, and illustrate its promising performance through testing on benchmark datasets. We provide strong theoretical support for our methodology by showing that the posterior distribution concentrates at the minimax rate (up-to a logarithmic factor) for sparse functions and functions with additive structures in the high-dimensional regime where the dimensionality of the covariate space is allowed to grow near exponentially in the sample size. Our method also adapts to the unknown smoothness and sparsity levels, and can be implemented by making minimal modifications to existing BART algorithms.

  % A common strategy for obtaining flexible estimates of functions is to use an
  % ensemble of decision trees. These methods include include gradient boosting,
  % random forests, and Bayesian model averaging. A potential shortcoming of
  % decision tree ensembles is their lack of smoothness; for example, to
  % approximate a linear function, a single decision tree requires a large number
  % of branches. We show that this problem can be mitigated by instead considering
  % decision trees in which the decisions are treated as probabilistic. This is
  % implemented in the context of the Bayesian additive regression trees
  % framework. We show that this approach gives potentially dramatic performance
  % increases on commonly used benchmark problems, both in simulation and in
  % practice. We provide theoretical support for our methodology by showing that
  % the posterior concentrates at the minimax rate, up-to a logarithmic factor,
  % for $\alpha$-H\"older functions, adaptively over $\alpha$. Our priors over
  % tree structures also allow for adaptation to sparsity; combined with our
  % smoothing of the decision trees, this allows the posterior to concentrate at
  % near-optimal rates adaptively over many classes of functions, including
  % $\alpha$-H\"older sparse functions and additive functions.

  \vspace{1em}

  \noindent \textbf{Key words:}
  Bayesian additive regression trees,
  Bayesian nonparametrics,
  high dimensional,
  model averaging,
  posterior consistency.
\end{abstract}

\doublespacing

\section{Introduction}

\hlb{Consider a nonparametric regression model $Y = f_0(X) + \epsilon$ with response $Y$, $X \in [0,1]^p$ a $p$-dimensional predictor, $f_0$ an unknown regression function of interest, and Gaussian noise $\epsilon \sim \Normal(0, \sigma^2)$. Suppose we observe $\Data = ((X_1, Y_1), \ldots, (X_n, Y_n))$ consisting of independent and identically distributed copies of $(X,Y)$.} 
% Consider the nonparametric regression model \(Y = f_0(X) + \epsilon\), response $Y$, \(X \in [0,1]^p\) a $p$-dimensional covariate vector, $f_0$ an unknown regression function of interest, and Gaussian noise \(\epsilon \sim \Normal(0, \sigma^2)\). Data \(\Data = ((X_1, Y_1), \ldots, (X_n, Y_n))\) is composed of independent and identically distributed copies of \((X, Y)\). 
A popular approach to estimating \(f_0(x)\) is to form an ensemble of decision trees; common techniques include boosted decision trees \citep{freund1999short} and random forests \citep{breiman2001random}. Bayesian tree-based models, such as the Bayesian additive regression trees (BART) model \citep{chipman2010bart}, have recently attracted interest from practitioners due to their excellent empirical performance and natural uncertainty quantification; BART has been applied in a wide variety of contexts such as nonparametric function estimation with variable selection \citep{bleich2014variable, linero2016bayesian}, analysis of loglinear models \citep{murray2017log}, and survival analysis \citep{sparapani2016nonparametric}. % , and was among the best performing methodologies in the 2016 Atlantic Causal Inference Conference Competition \citep{hill2011bayesian,Hill:2016,hahn2017bayesian}.
\hlb{Additionally, BART is consistently among the best performing methodologies in the Atlantic Causal Inference Conference Data Analysis Challenge \citep{hill2011bayesian, Hill:2016, hahn2017bayesian, dorie2017automated}.}

Despite the recent popularity of Bayesian tree-based models, they suffer from several drawbacks. First, in the regression setting, estimators based on decision trees are not capable of adapting to higher smoothness levels exhibited in $f_0$ due to their piecewise-constant nature. Second, as illustrated by \citet{linero2016bayesian}, they suffer from the curse of dimensionality --- their prediction performance deteriorates as the dimensionality $p$ increases. Last but not least, very little theoretical work has been done for understanding large sample properties of Bayesian tree-based approaches from a frequentist perspective.

In this article, we propose a new method, called soft Bayesian additive regression trees (SBART) which improves both practically and theoretically upon existing Bayesian sum-of-trees models. To address the first aforementioned drawback, we employ a ensemble of carefully designed ``soft'' decision trees as building blocks in the BART model, and show in both empirical studies and theoretical investigation that the resulting Bayesian approach can adapt to the unknown smoothness level of the true regression function $f_0$ --- the corresponding posterior distribution achieves the minimax rate $n^{-\alpha/(2\alpha + p)}$ of contraction up to logarithmic terms \citep{ghosal2000convergence} when \(f_0 \in \mathcal C^{\alpha,R}([0,1]^d)\) where \(C^{\alpha,R}([0,1]^d)\) denotes a H\"older space with smoothness index \(\alpha\) and radius \(R\). % In addition, by using location dependent bandwidth parameters that govern the resolution level of each leaf, SBART shares the locally adaptive behavior of wavelets, and can capture sharp localised features (see Section~\ref{sec:local} for illustrations).

To overcome the curse of dimensionality, we specify sparsity inducing priors \citep{linero2016bayesian} for the splitting rule probabilities in the soft decision trees. We show that SBART takes advantage of structural sparsity in the true regression function $f_0$ --- when $f_0$ only depends on $d \ll p$ predictors and is $\alpha$-H\"older smooth, the resulting posterior distribution contracts towards the truth at a rate of $n^{-\alpha/(2\alpha + d)}+\sqrt{n^{-1} d\log p}$ up to logarithmic terms, which is near minimax-optimal even in the high-dimensional setting where \(p\) grows nearly exponentially fast in \(n\) \citep{yang2015minimax}. Furthermore, due to the additive nature of sum-of-trees based models, we show that SBART can also adapt to low-order non-linear interactions: if $f_0$ can be decomposed into many low dimensional pieces $f_0 = \sum_{v = 1}^V f_{0v}$, where each additive component $f_{0v}$ is $d_v$-sparse and $\alpha_v$-smooth, then SBART also achieves a near-minimax rate of posterior contraction. Compared to the rate for the general sparse case, which allows at most $o(\log n)$ many active predictors for consistency, the rate for additive structures potentially allows $o(n^\beta)$ many predictors for some $\beta \in (0,1)$; this partly explains the empirical success of Bayesian sum-of-tree approaches, as many real-world phenomena can be explained in terms of a small number of low-order interactions.

% This guarantee under the latter additive sparsity structure allows SBART to include $o(n)$ many active covariates in the model, as opposed to only $o(\log n)$ in the general sparse case, which partly explains the empirical success of Bayesian sum-of-tree based approaches.

Our proofs involve a key lemma that links sum-of-tree type estimators with kernel type estimators. Unlike frequentist kernel type estimators that require prior knowledge on the smoothness level of $f_0$ for choosing a smoothness matching kernel, Bayesian sum-of-tree based methods are adaptive, requiring no prior knowledge of the smoothness levels $\{\alpha_v\}$, number of additive components $V$, or degree of lower-order interactions $d_v$, while still attaining near-minimax rates even under the high-dimensional setting. Practically, SBART can be implemented by making minimal modifications to existing strategies for fitting Bayesian tree-based models: the sparsity-inducing prior uses conditionally-conjugate Dirichlet priors which can be easily accommodated during Gibbs sampling, while replacing the usual decision trees with soft decision trees requires minor changes to the backfitting algorithm typically used with BART.

\subsection{Related Work}

There has been a recent surge of interest in the theoretical properties of BART-type models. While our work was under review we learned that, \hlb{in essentially simultaneous work}, \citet{rockova2017posterior} established similar posterior contraction rates for a particularly designed BART prior, using a so-called ``spike-and-tree'' prior to allow for the ensemble to adapt to sparsity. In particular, they show that a single deep decision tree can approximate any function with smoothness level $\alpha\leq 1$, which is then divided among trees with smaller depth. % which however does not explain the empirical success of BART where many non-empty trees are used and all trees in the posterior samples tend to be shallow
Our theory instead relies on linking sum-of-tree type estimators with kernel type estimators, which only need shallow trees and motivate the usage of soft-decision trees. Practically, the most relevant difference is that our SBART prior allows for adaptation to the smoothness level even when \(\alpha > 1\), whereas the use of piecewise-constant basis functions in traditional BART models only allows for adaptation to functions which are at-most Lipschitz-smooth ($\alpha \le 1$). % \hlb{\textbf{Can you add a comment on how the rate differs in the two papers? I think they get slightly stronger results in allowing $d$ to grow and getting a sharper rate for additive structures.}}\todo[color = blue]{} 
An additional difference is that we focus on establishing concentration results for the fractional posterior, which allows for less restrictive assumptions about our choice of prior; in our supplementary material, we also provide concentration results for the usual posterior, under more stringent conditions. In even more recent work, \citet{alaa2017bayesian} establish consistency results for BART-type priors for estimating individual treatment effects in causal inference settings, and also noted the limitation of BART in adapting to a smoothness order higher than \(\alpha =1\).

The soft decision trees we use are similar in spirit to those used by \citet{irsoy2012soft}, who considered a soft variant of the CART algorithm. Our work differs in that (i) our trees are not learned in a greedy fashion, but instead by extending the Bayesian backfitting approach of \citet{chipman2010bart}, (ii) we consider an ensemble of soft trees rather than a single tree, (iii) we use a different parameterization of the gating function which does not consider oblique decision boundaries, and (iv) we establish theoretical guarantees for our approach.

% The switch to soft decision trees results in extra computational burden, as we
% need to solve an $L_t \times L_t$ system of equations in order to update each
% tree, where $L_t$ denotes the number of leaves in a given tree. By design, the
% prior on the tree structures results in very shallow trees being used, so that
% this is not a practical impediment.

% The code used in our illustrations is available at
% \url{www.github.com/theodds/SBART}. Our code is based on the implementation
% of BART in the \texttt{BayesTree} package, and is roughly equivalent in terms of
% speed. Given enough optimization, we believe that our methods could reach the
% same levels of performance as existing highly-optimized implementations of BART
% \citep{kapelner2014bartmachine}.

The rest of the paper is organized as follows. In Section~\ref{sec:SBART}, we develop our SBART prior. In Section~\ref{sec:theory} we state our theoretical results. In Section~\ref{sec:illustrations}, we illustrate the methodology on both simulated and real datasets. We finish in Section~\ref{sec:discussion} with a discussion. Proofs are deferred to the appendix. In supplementary material, we provide additional computational details, timing results, and additional theoretical results extending our fractional posterior results to the usual posterior.

\section{Soft Bayesian sum of trees models}
\label{sec:SBART}

\subsection{Description of the model}
\label{sec:description}

We begin by describing the usual ``hard'' decision tree prior used in BART. We model \(f_0(x)\) as the realization of a random function
\begin{align}\label{Eqn:SumOfTree}
  f(x) = \sum_{t = 1}^T g(x; \Tree_t, \sM_t), \qquad x \in \Reals^p,
\end{align}
where \(\Tree_t\) denotes the topology/splitting rules of the tree, \(\sM_t = (\mu_{t1}, \ldots, \mu_{tL_t})\) is a collection of parameters for the leaf nodes and \(L_t\) denotes the number of leaves. The function \(g(x ; \Tree_t, \sM_t)\) returns \(\sum_{\ell = 1}^L \mu_{t\ell}\, \phi(x ; \Tree_t, \ell)\) where \(\phi(x ; \Tree_t, \ell)\) is the indicator that \(x\) is associated to leaf node \(\ell\) in $\Tree_t$.

Following \citet{chipman2010bart}, we endow \(\Tree_t\) with a branching process prior. The branching process begins with a root node of depth \(k = 0\). For \(k = 0, 1, 2, \ldots\), each node at depth \(k\) is non-terminal with probability \(q(k) = \gamma(1 + k)^{-\beta}\) where \(\gamma > 0\) and \(\beta > 0\) are hyperparameters controlling the shape of the trees. It is easy to check using elementary branching process theory that this process terminates almost surely provided that $\beta > 0$ \citep{athreya2004branching}.

Given the tree topology, each branch node \(b\) is given a decision rule of the form \([x_j \le C_b]\), with \(x\) going left down the tree if the condition is satisfied and right down the tree otherwise. The predictor \(j\) is selected with probability \(s_j\) where \(s = (s_1, \ldots, s_p)\) is a probability vector. We assume that \(C_b \sim \Uniform(a,b)\) where \(a\) and \(b\) are chosen so that the cell of \(\Reals^p\) defined by the path to \(b\) is split along the \(j\)th coordinate. The leaf parameters \(\mu_{t\ell}\) are assumed independent and identically distributed from a \(\Normal(0, \sigma_\mu^2/T)\) distribution. The scaling factor \(T\) ensures the stability of the prior on \(f\) as the number of trees increases --- loosely speaking, the functional central limit theorem implies the convergence of the prior on \(f\) to a Gaussian process as \(T \to \infty\).

We now describe how to convert the hard decision tree described above into a soft decision tree. Rather than \(x\) following a deterministic path down the tree, \(x\) instead follows a probabilistic path, with \(x\) going left at branch \(b\) with probability
\begin{math}
  \psi(x ; \Tree, b) =
  % \psi\{(x_j - C_b)/\tau_b\},
  \psi\left( \frac{x_j - C_b}{\tau_b} \right),
\end{math}
where $\tau_b>0$ is a bandwidth parameter associated with branch $b$. Averaging over all possible paths, the probability of going to leaf $\ell$ is % To obtain
% a prediction, we then average over the paths \(x\) might have taken down the
% tree to get a weight for each leaf
\begin{align}
  \label{eq:proper}
  \phi(x ; \Tree, \ell) = \prod_{b \in A(\ell)} \psi(x; \Tree, b)^{1 - R_b} (1 - \psi(x; \Tree, b))^{R_b},
\end{align}
where \(A(\ell)\) is the set of ancestor nodes of leaf \(\ell\) and \(R_b = 1\) if the path to \(\ell\) goes right at \(b\). The parameter \(\tau_b\) controls the the sharpness of the decision, with the model approaching a hard decision tree as \(\tau_b \to 0\), and approaching a constant model as \(\tau_b \to \infty\). Unlike hard decision trees where each leaf is constrained to only locally influence the regression function $f$ near its center $\{C_b\}$, each leaf in the soft decision tree imposes a global impact on $f$, whose influence as $x$ deviates from the center depends on the local bandwidths $\{\tau_b\}$. As we will illustrate, this global impact of local leaves enables the soft tree model to adaptively borrow information across different covariate regions, where the degree of smoothing is determined by the local bandwidth parameters learned from the data. This is illustrated in Figure~\ref{fig:gate-to-kernel} for a simple univariate soft decision tree. In our illustrations we use the logistic gating function $\psi(x) = (1 + e^{-x})^{-1}$.

% When proving our theoretical results, we work with particularly
% convenient gating functions given by half-Gaussian displayed in
% Figure~\ref{fig:gating-functions}, given by
% \begin{align*}
%   \psi^{\Left}(x) &= e^{-x^2/2} I(x \le 0) + I(x > 0), \\ 
%   \psi^{\Right}(x) &= \psi^{\Left}(x) - e^{-x^2/2}.
% \end{align*}
% When this pair is used, branch $b$ uses activation $\Left$ a priori with
% probability $\nicefrac 1 2$. This is primarily done for convenience, as the
% properties of Gaussian kernel approximations are well studied, however our
% theory could also be adapted to the single logistic gating function with more
% involved theory.

% \begin{figure}
%   \centering
%   \includegraphics[width = .6\textwidth]{}
%   \caption{Gating functions used in this paper. Solid-dashed is the logistic
%     gating function, dark blue is the left-Gaussian, light green is the
%     right-Gaussian.}
%   \label{fig:gating-functions}
% \end{figure}

\begin{figure}[t]
  \centering
  \includegraphics[width = .5\textwidth,valign=t]{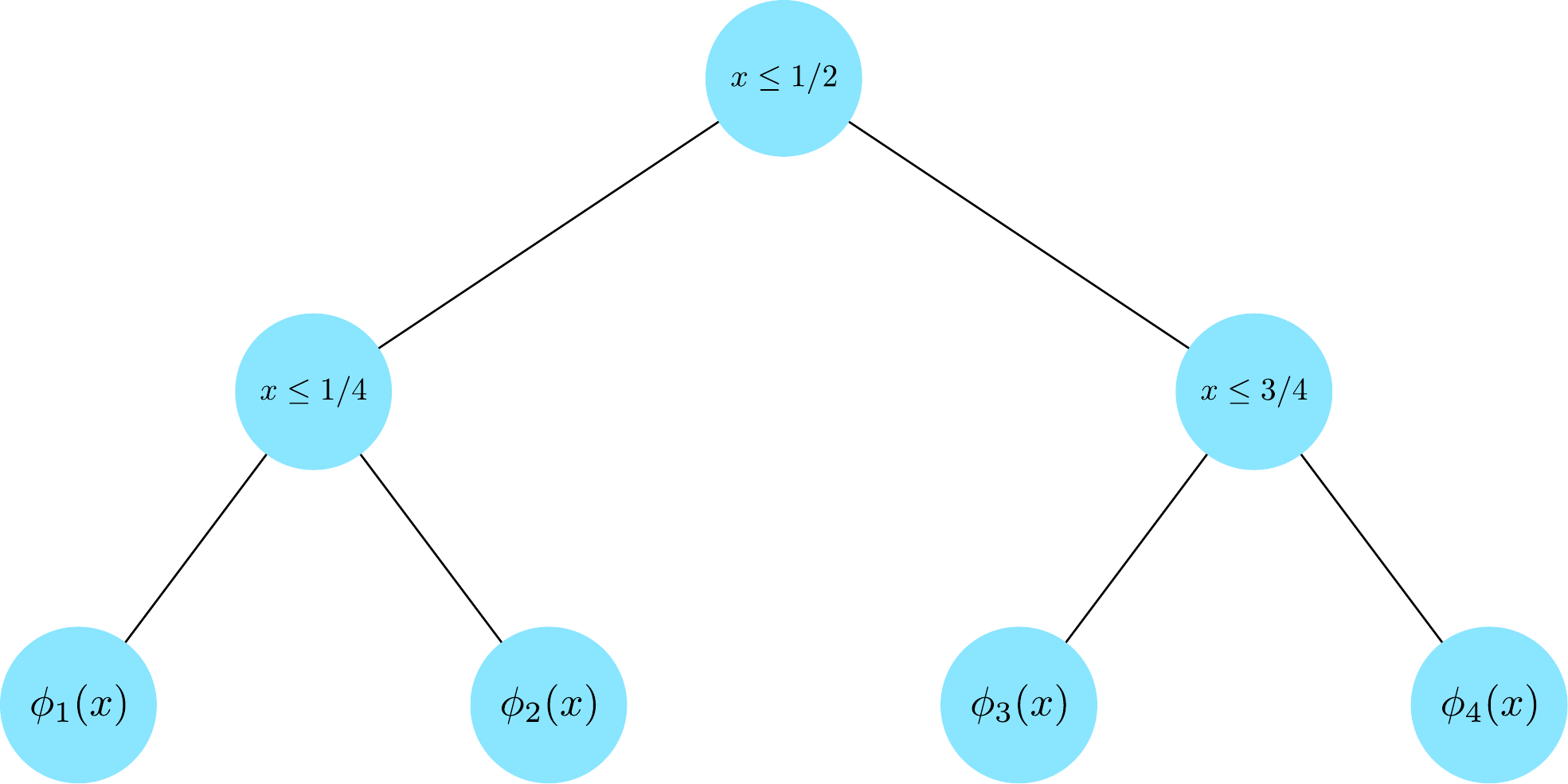}
  \includegraphics[width = .4\textwidth,valign=t]{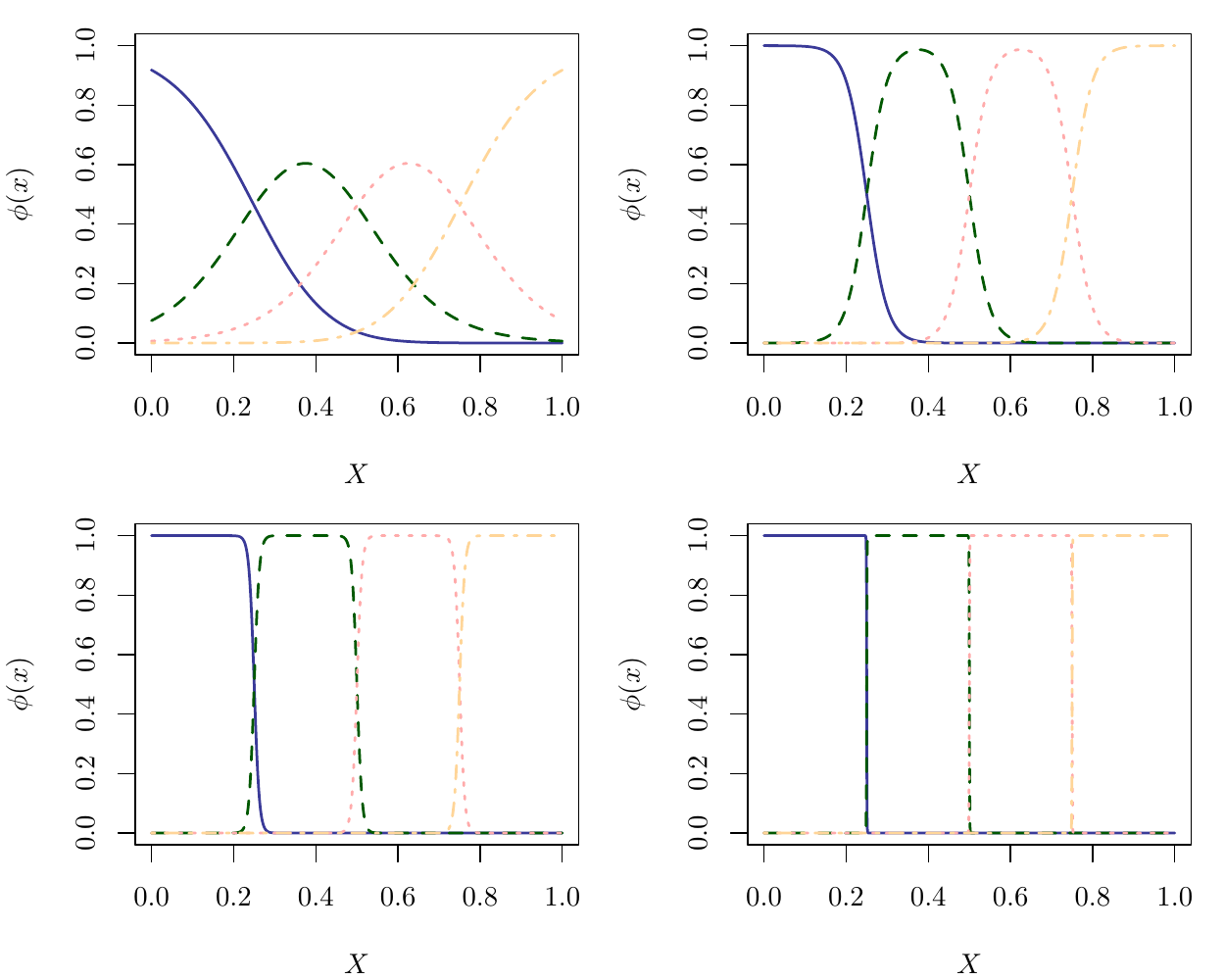}
  \caption{Left: example tree, with cut points at $x = 0.5, 0.25$, and $0.75$.
    Right: the weights $\phi_\ell(x)$ for $\ell = 1,\ldots,4$ as functions of
    $x$ for the values $\tau^{-1} \in \{10,40,160,2560\}$.}
  \label{fig:gate-to-kernel}
\end{figure}

\subsection{Smoothness adaptation}

A well-known feature of decision trees is their lack of smoothness. Single-tree algorithms such as the CART algorithm \citep[Chapter 9.2]{hastie2016elements} result in step-function estimates, suggesting that they should not be capable of efficiently estimating smooth functions \citep{gyorfi2006distribution}. Methods based on ensembles of decision trees average over many distinct partitions of the predictor space, resulting in some degree of smoothing. Even with this averaging, the estimated regression functions are not smooth. Heuristically, we note that under our BART specification the function $f$ is not differentiable in quadratic mean. Indeed, with trees of depth $1$, $p = 1$, and cutpoints $C_b \sim G$, simple calculations give $E\{(f(x + \delta) - f(x))^2\} \propto \delta G'(x) + o(\delta)$. Consequently, BART ensembles with a large number of trees resemble nowhere-differentiable continuous functions, and in the limit as $T \to \infty$ the BART prior converges to a nowhere-differentiable Gaussian process. This heuristic argument suggests that BART can only adapt to functions with H\"{o}lder smoothness level no greater than $\alpha = 1$ (Lipschitz functions).

Figure~\ref{fig:simple-functions} compares the fit of BART to SBART with \(\tau_b \equiv 0.1\). We see that when $T = 1$ trees are used we require a large number of leaf nodes to model relatively simple functions. At a large scale, we see that the BART fit resembles a nowhere-differentiable continuous function. While an improvement, the estimate from BART is still not sufficiently smooth and exhibits large fluctuations.

The fit of the soft decision tree in Figure~\ref{fig:simple-functions} by comparison is infinitely differentiable and requires only a small number of parameters. Consequently, we obtain a fit with lower variance and negligible bias. An attractive feature of soft decision trees exhibited in Figure~\ref{fig:simple-functions} is their ability to approximate linear relationships. In this case, even when $T = 1$, we recover the smooth functions almost exactly.

\begin{figure}[t]
  \centering
  \includegraphics[width=.8\textwidth]{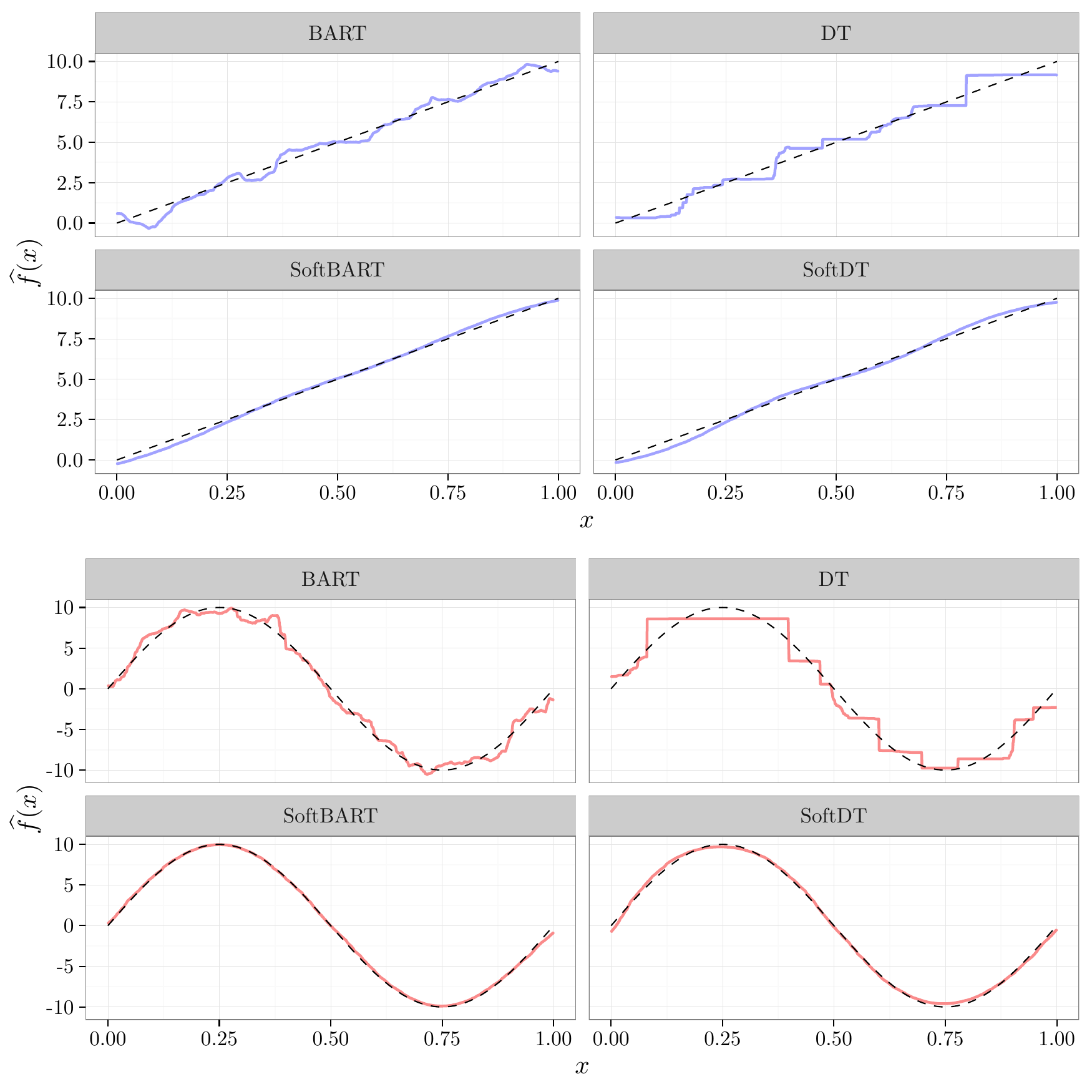}
  \caption{Posterior means (solid) against underlying true regression function
    (dashed). Error variance is \(\sigma^2 = 2^2\). Top: \(f(x) = 10 x_1\).
    Bottom: \(f(x) = 10 \sin(2 \pi x_1)\). BART denotes the BART model with \(T
    = 50\), DT denotes the BART model with \(T = 1\), and soft variants are
    prefixed by Soft.}
  \label{fig:simple-functions}
\end{figure}

\subsection{Prior specification and implementation}
\label{sec:default-prior}

Following \citet{chipman2010bart}, in this section we develop a ``default'' SBART prior. The goal is to develop a prior which can be used routinely, without requiring the user to specify any hyperparameters; while the choices below may appear ad-hoc, they have been found to work remarkably well across a wide range of datasets. After adopting the following default prior, users may wish to further tune the number of trees \(T\), the parameter \(r\) in the prior for \(\tau_b\), or use additional information regarding the targeted sparsity level. We stress, however, that a reasonable baseline level of performance is obtained without the need to do any further tuning.

Following \citet{chipman2010bart}, we recommend scaling \(Y\) so that most/all of the responses fall in the interval \([-0.5, 0.5]\). We also preprocess \(X_j\) so that \(X_j \sim \Uniform(0,1)\) approximately by applying a quantile normalization in which each \(X_{ij}\) is mapped to its rank, with $\min X_{ij} = 1$ and $\max X_{ij} = n$. We then apply a linear transformation so that the values of \(X_{ij}\) are in \([0,1]\). The goal of this preprocessing of \(X\) is to make the prior invariant under monotone transformations of \(X\), which is a highly desirable property of the original default BART model. % By design, our preprocessing step is also invariant under monotone transformations; moreover, when there are no ties in the predictors, our choice of normalization results in effectively the same prior as most BART implementations (potentially differing at out-of-sample points).

% We note that preprocessing of \(X\) is not required by standard implementations of BART because of how the \(C_b\)'s are modeled. In existing implementations, \(C_b\) is a priori chosen uniformly from the set of unique \(X_{ij}\)'s which generate a valid split of the current cell. As a result, the BART prior is invariant under monotone transformations of the predictors. By incorporating $\psi(x;\Tree,b)$ we give up this transformation invariance. 

We now describe our default prior for the bandwidths \(\tau_b\) and the splitting proportions \(s = (s_1, \ldots, s_p)\). We use a sparsity-inducing Dirichlet prior,
\begin{align}
  \label{eq:sparsity-inducing}
  s \sim \Dirichlet(a / p^\xi, \ldots, a / p^\xi), \qquad \xi \ge 1.
\end{align}
Our theoretical results require \(\xi > 1\), however in practice we find that setting \(\xi = 1\) works adequately. This Dirichlet prior for \(s\) was introduced by \citet{linero2016bayesian}; throughout, we refer to the BART model with \eqref{eq:sparsity-inducing} as Dirichlet additive regression trees (DART) to contrast with BART when no such sparsity-inducing prior is used. The parameter \(a\) controls the expected amount of sparsity in \(f\). Conditional on there being \(B\) branches in the ensemble, the number of predictors included in the ensemble is converges in distribution to $1 + Z$ where $Z \sim \Poisson(\theta)$ and $\theta = a\sum_{i = 1}^{B - 1} (a + i)^{-1}$ \citep{linero2016bayesian} when \(\xi = 1\). When prior information is available on the sparsity of \(f_0\), we can choose \(a\) to match the targeted amount of sparsity. By default we use a compound Gamma prior,
\begin{math}
  a / (a + \lambda_a) \sim \Beta(a_a, b_a),
\end{math}
with \(a_a = 0.5, b_a = 1, \lambda_a = p\). This prior attempts to strike a balance between the sparse and non-sparse settings by having an infinite density at \(0\), median \(\alpha = p / 4\), and an infinite mean.

There are several possibilities for choosing the bandwidth \(\tau_b\). In preliminary work, using tree-specific $\tau_t$'s shared across branches in a fixed tree worked well, with $\tau_t \sim \Exponential(r)$ where \(E(\tau_t) = r\). Our illustrations use $r = 0.1$, which, as shown in Figure~\ref{fig:gating}, gives a wide range of possible gating functions. An interesting feature of the sampled gating functions is that both approximate step functions and approximately linear functions are supported.
\begin{figure}
  \centering
  \includegraphics[width = .55\textwidth]{./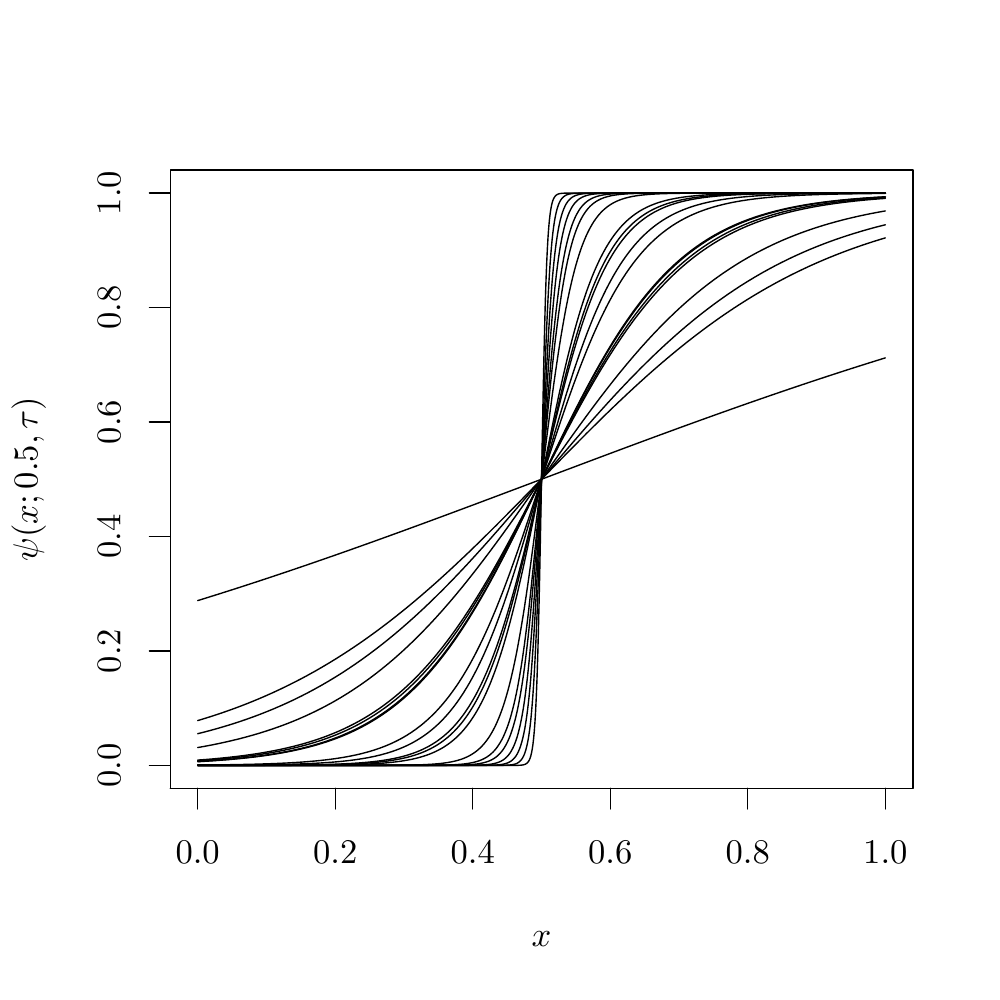}
  \caption{Draws of the gating function \(\psi(x ; \Tree, b)\) when \(\tau_b
    \sim \Exponential(0.1)\) and \(C_b = 0.5\).}
  \label{fig:gating}
\end{figure}

We give \(\sigma = {\Var(\epsilon)}^{\nicefrac 1 2}\) a half-Cauchy prior, \(\sigma \sim \Cauchy_+(0, \widehat \sigma)\). Again following \citet{chipman2010bart}, $\widehat \sigma$ is an estimate of \(\sigma\) based on the data. We use an estimate \(\widehat \sigma_{\text{lasso}}\) of \(\sigma\) obtained by fitting the lasso using the \texttt{glmnet} package in \texttt{R}.

The model has hyperparameters \((\sigma_\mu^2, \gamma, \beta, T)\). In preliminary work, we did not have success placing priors on \(\gamma\) and \(\beta\), and instead fix \(\gamma = 0.95\) and \(\beta = 2\) \citep{chipman2010bart}. We give \(\sigma_{\mu}\) a half-Cauchy prior, \(\sigma_{\mu} \sim \Cauchy_+(0, 0.25)\), where \(0.25\) is chosen so that \(\sigma_{\mu}\) has median equal to the default value recommended by \citet{chipman2010bart}.

An important remaining specification is the number of trees \(T\) to include in the ensemble. The theoretical results we establish in Section~\ref{sec:theory} make use of a prior distribution on \(T\); however, our attempts to incorporate a prior on \(T\) using reversible jump methods \citep{green1995reversible} resulted in poor mixing of the associated MCMC algorithms. Generally, we have found that fixing \(T\) at a default value of \(T = 50\) or \(T = 200\) is sufficient to attain good performance on most datasets. Tuning \(T\) further often provides a modest increase in performance, but may be worth the effort on some datasets (see Section~\ref{sec:benchmark}).

There are a number of possible options for tuning \(T\), such as approximate leave-one-out cross validation using Pareto-smoothed importance sampling (PSIS-LOO) \citep{vehtari2015practical}, maximizing an approximate marginal likelihood obtained using (say) WBIC \citep{watanabe2013widely}, or \(K\)-fold cross validation as recommended by \citet{chipman2010bart}. The advantage of WBIC and PSIS-LOO is that they require fitting the model only once for each value of \(T\). In practice, we have found that approximations such as WBIC and PSIS-LOO are unreliable, with PSIS-LOO prone to overfitting and WBIC requiring potentially very long chains to estimate. Figure~\ref{fig:kfold} displays the values of PSIS-LOO, a WBIC approximation of the negative marginal likelihood of \(T\) \citep{watanabe2013widely}, and $5$-fold cross validation, when used to select $T$ for a replicate of the illustration in Section~\ref{sec:friedman} with $p = 100$ predictors. Both WBIC and cross validation select $T = 10$, which also minimizes the root mean squared error $\int (f_0(x) - \widehat f(x))^2 \ dx$. Resource permitting, we have found \(K\)-fold cross-validation to be the most reliable method for selecting \(T\). 

\begin{figure}
  \centering
  \includegraphics[width=.8\textwidth]{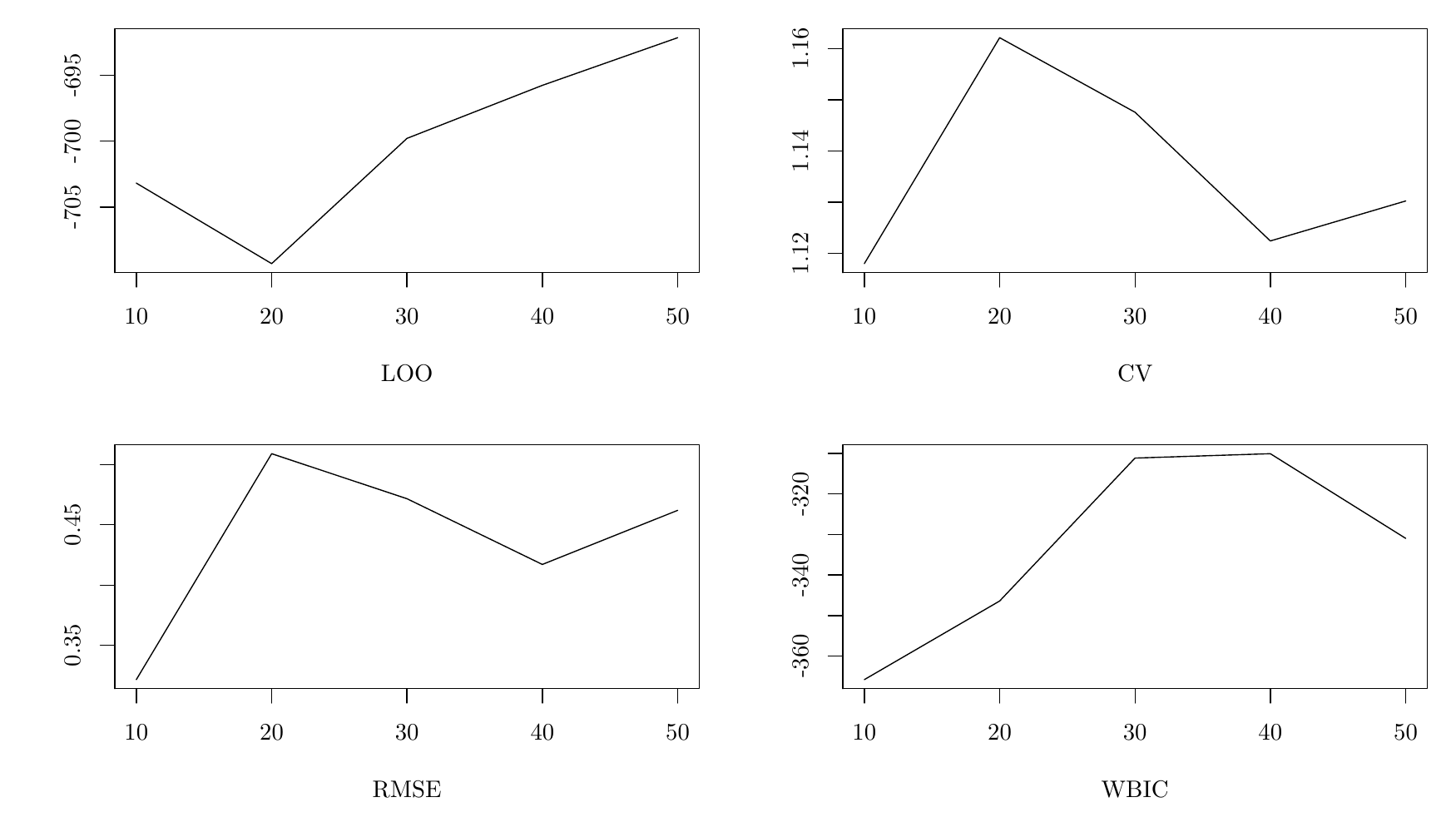}
  \caption{Selecting $T$ using LOO, cross validation, and WBIC, with the
    population root mean squared error for $f$.}
  \label{fig:kfold}
\end{figure}

As a default we use the following priors throughout the manuscript.
\begin{equation}
\begin{alignedat}{3}
  s      & \sim \Dirichlet(a / p, \ldots, a / p),   & \qquad &  &  & \frac{a}{a  + p}  \sim \Beta(0.5, 1),       \\
  \tau_t & \indep \Exponential(0.1),                          & \qquad &  &  & \sigma_{\mu }               \sim \Cauchy_+(0, 0.25) , \\
  \sigma & \sim \Cauchy_+(0, \widehat \sigma_{\text{lasso}}), & \qquad &  &  & \gamma = 0.95 ,                                \\
  \beta & = 2.                                            & 
\end{alignedat}\label{eq:def-prior}
\end{equation}
% For the purposes of our illustrations, we fix $T = 50$ as a default; however, we have found that selecting \(T\) by cross-validation can occasionally provide moderate improvements in performance (see Section~\ref{sec:benchmark}).

\subsection{Variable grouping prior}

The sparsity-inducing prior \eqref{eq:sparsity-inducing} can be extended to allow penalization of groups of predictors simultaneously, in a manner similar to the group lasso \citep{yuan2006model}. Suppose that the predictors can be divided into \(M\) groups of size \(P_m\). We set
\begin{align}
  \label{eq:grouping-prior}
  \begin{split}
    s_{mk} &= u_{m} \cdot v_{mk}, \\
    u &\sim \Dirichlet(a / M, \ldots, a / M), \\
    v_{m} &\sim \Dirichlet(\omega / P_m, \ldots, \omega / P_m).
  \end{split}
\end{align}
We primarily use the grouping prior to allow for the inclusion of categorical predictors through the inclusion of dummy variables. This is an extension of the approach used by the \texttt{bartMachine} package in \texttt{R}. An alternative approach to the inclusion of categorical predictors, used in the \texttt{BayesTree} package, is to construct decision rules based on a dummy variable \(Z_j = I(X_j \in A_b)\) where \(A_b\) is a random subset of the possible values of predictor \(j\). In our illustrations, we let \(\omega \to \infty\) so that \(v_{mk} = P_m^{-1}\) and set \(a / (a + M) \sim \Beta(0.5, 1)\).

\subsection{Posterior computation}

We use the Bayesian backfitting approach described by \citet{chipman2010bart} to construct a Markov chain Monte Carlo (MCMC) algorithm to sample approximately from the posterior.

\begin{algorithm}
  \caption{Bayesian backfitting algorithm} \label{alg:bayesian-backfitting}
  \begin{algorithmic}[1]
    \For{$t = 1, \ldots, T$}
    \State Set \(Y_i^\star \gets Y_i - \sum_{k \ne t} g(X ; \Tree_{k} , \sM_k)\)
    for \(i = 1, \ldots, N\).
    \State Sample \(\Tree_t \sim \operatorname{Metrop}_{\Tree}(Y^\star, X, \tau_t, h)\). 
    \State Sample \(\tau_t \sim \operatorname{Metrop}_{\tau}(Y^\star, X, \Tree_t, h)\).
    \State Sample \(\sM_t \sim \Normal(\widehat \mu_t, \Omega_t)\) with
    \((\widehat \mu_t, \Omega_t)\) described as in the supplementary material.
    \EndFor
    \State Sample \(s \sim \Dirichlet(a / p^\xi + c_1, \ldots, a / p^\xi + c_p)\)
    where \(c_j = \#\{b : \text{branch $b$ splits on predictor $j$}\}\).
    \State Sample $(\sigma, \sigma_{\mu}, a)$ as described in the
    supplementary materials.
  \end{algorithmic}
\end{algorithm}

Within Algorithm~\ref{alg:bayesian-backfitting}, \(\Tree_t\) is updated using a Metropolis-Hastings proposal. Proposals consist of one of three possible moves: \texttt{Birth}, which turns a leaf node into a branch node; \texttt{Death}, which turns a branch node into a leaf node; and \texttt{Change}, which changes the decision rule of a branch \(b\). A detailed description of these moves, and their associated transition probabilities, is given in the supplementary materials.

Constructing efficient updates for \(\Tree_t\) and \(\tau_t\) requires marginalizing over \(\sM_t\). Because the errors are assumed Gaussian, this marginalization can be carried out in closed form. The main computational drawback of SBART relative to BART lies in this marginalization, as SBART requires computing a likelihood contribution for each leaf-observation pair, whereas BART only requires a single likelihood contribution for each tree. Hence, if the trees are deep, BART will be substantially faster. By the construction of the prior, trees generally are not deep enough for this difference to be prohibitive.

The Dirichlet prior \(s \sim \Dirichlet(a / p^\xi, \ldots, a / p^\xi)\) allows for a straight-forward Gibbs sampling update, with the full conditional given by
\begin{math}
  s \sim \Dirichlet(a / p^\xi + c_1, \ldots, a / p^\xi + c_p),
\end{math}
where \(c_j = \#\{b : \text{branch $b$ splits on predictor $j$}\}\). When the grouping prior \eqref{eq:grouping-prior} is used we also obtain simple Gibbs sampling updates, with $u \sim \Dirichlet(a/M + z_1, \ldots, a/M + z_M)$ and $v_m \sim \Dirichlet(\omega / P_m + c_{m1}, \ldots, \omega / P_m + c_{mP_m})$,
% \begin{alignat*}{3}
%   u &\sim \Dirichlet\left(\frac{\alpha}{M} + z_1, \ldots, \frac \alpha M +
%     z_M\right), \\
%   v_m &\sim\Dirichlet\left(\frac \omega {P_m} + c_{m1}, \ldots, \frac \omega {P_m} + c_{mP_m}\right),
% \end{alignat*}
where \(z_m = \#\{b : \text{branch $b$ splits on a predictor in group $m$}\}\)
and \(c_{mk} = \#\{b : \text{branch $b$ splits on predictor $mk$}\}\).

% The parameters \(\sigma^2_{\mu}\) and \(\sigma^2\) are updated using independent
% Metropolis-Hastings proposals. The parameter \(\tau_t\) for each tree is updated
% during the backfitting algorithm. Like the sampler for the tree topologies, we
% marginalize over \(\sM_t\). We use a random-walk Metropolis-Hastings proposal
% for \(\log \tau_t\). 

\section{Theoretical results}
\label{sec:theory}

We study the theoretical properties of the SBART procedure from a frequentist perspective by assuming that $(Y_1,Y_2,\ldots,Y_n)$ are generated from the model $Y_i=f_0(X_i)+\epsilon_i$ with some true unknown regression function $f_0$. We assume that $f_0$ is a function over $[0,1]^p$. We prove posterior consistency results when $f_0$ is a member of certain H\"older spaces. Let $\sC^{\alpha}([0,1]^p)$ denote the H\"older space with smoothness index $\alpha$, i.e., the space of functions on $[0,1]^p$ with bounded partial derivatives up-to order $\beta$, where $\beta$ is the largest integer strictly less than $\alpha$ and such that the partial derivatives of order $\beta$ are H\"older-continuous of order $\alpha - \beta$. Let $\sC^{\alpha,R}([0,1]^p) = \{f \in \sC^\alpha([0,1]^p) : \|f\|_\alpha \le R\}$ denote the H\"older-ball of radius $R$ with respect to the H\"older norm $\|f\|_\alpha$ (see \citealp{ghosal2017fundamentals}, Appendix C).

We consider the posterior convergence of the Bayesian fractional posterior obtained by raising the likelihood function by a factor \(\eta \in (0,1]\) in the Bayes formula
\begin{align}
  \label{eq:fractional}
  \Pi_{n,\eta}(A)
  =
  \frac{\int_A \prod_{i = 1}^n p_f(Y_i \mid X_i)^\eta \ \Pi(df)}
  {\int \prod_{i=1}^n p_f(Y_i \mid X_i)^\eta\ \Pi(df)},
\end{align}
where $\Pi$ denotes the prior probability measure over $\mathcal L^2([0,1]^p)$, the $\mathcal{L}^2$ space over $[0,1]^p$.
\hlb{
Fractional posteriors have gained renewed attention in Bayesian statistics due to their robustness to model misspecification \citep{grunwald2012safe,miller2018robust}. According to \cite{walker2001bayesian}, the fractional posterior can be viewed as combining the original likelihood function with a data-dependent prior that is divided by a portion of the likelihood. This data dependent reweighting in the prior helps to prevent from possible inconsistencies by reducing the weights of those parameter values that ``track the data too closely''. Additionally, the fractional posterior with $\eta < 1$ permits much simpler theoretical analyses. 
}
Note that $\eta = 1$ corresponds to the usual posterior distribution. Abusing notation slightly, we will also use $\Pi$ to denote the prior probability measure over the parameters $(\Tree_t, \sM_t)$ and any hyperparameters in the model. Our goal is to find a sequence $\{\varepsilon_n:\, n\geq 1\}$ such that, for a sufficiently large constant $M$ and fixed $\eta$, 
\begin{align}
  \label{eq:consistent}
  \Pi_{n,\eta}\big[ \|f- f_0\|_n \geq M \varepsilon_n\big] \to 0,
  \quad
  \text{in probability as $n,\,p \to \infty$},
\end{align}
where $\|\cdot\|_n$ denotes the $\mathcal{L}^2(\mathbb{P}_n)$ norm on the function space $\mathcal L^2([0,1]^p)$ defined by $\|f-g\|_n^2=n^{-1}\sum_{i=1}^n(f(X_i)-g(X_i))^2$. The sequence $\varepsilon_n$ is then an upper bound on the posterior contraction rate. The norm $\|\cdot\|_n$ is a commonly adopted discrepancy metric in function estimation problems.

In this section, we focus on establishing \eqref{eq:consistent} for $\eta < 1$. The benefit of considering $\eta < 1$ is that this allows us to bypass verifying technical conditions regarding the effective support of the prior and the existence of a certain sieve \citep{ghosal2000convergence,ghosal2007convergence}, which allows for \eqref{eq:consistent} to be established under very weak conditions. In the supplementary material we establish posterior consistency for $\eta = 1$ under more stringent conditions on the prior. 

The main condition governing the posterior contraction rate is that the prior $\Pi$ is sufficiently ``thick'' at $f_0$, in the sense that there exists a $C > 0$ such that
\begin{align}
  \label{eq:lower-bound}
  \Pi(B_{\varepsilon_n}(f_0)) \ge e^{-Cn\varepsilon_n^2},
\end{align}
where $B_\varepsilon(f_0)$ denotes an $\varepsilon$-Kullback-Leibler (KL) neighborhood of the truth
\begin{align*}
  B_\epsilon(f_0) =
  \left\{ f :  n^{-1}\sum_{i=1}^n\int p_{f_0}^{(i)} \log\left( \frac{p^{(i)}_{f_0}}{p^{(i)}_f} \right) \ dy \le \varepsilon^2   \right\}
  \cap
  \left\{ f : n^{-1}\sum_{i=1}^n\int p^{(i)}_{f_0} \log^2\left( \frac{p^{(i)}_{f_0}}{p^{(i)}_f} \right) \ dy \le \varepsilon^2 \right\},
\end{align*}
where $p_f^{(i)}$ denotes the $i$th Gaussian density with mean $f(X_i)$ and variance $\sigma^2$. For convenience, we adopt the customary practice of assuming that $\sigma$ is fixed and known when studying the posterior contraction rate. In the regression setting, it is straightforward to verify that the KL neighborhood $B_\varepsilon(f_0)$ contains the $\mathcal{L}^2(\mathbb{P}_n)$ neighborhood $\{\|f-f_0\|_n \leq 2\sigma\, \varepsilon\}$. Therefore, to establish condition~\eqref{eq:lower-bound}, it suffices to find $\varepsilon_n$ such that $\Pi(\|f-f_0\|_\infty \leq 2\sigma\, \varepsilon_n) \geq e^{-Cn\varepsilon_n^2}$ holds, where $\|g\|_\infty=\sup_{x\in[0,1]^p}|g(x)|$ denotes the $\sup$ norm of a function $g$ in $\mathcal{L}^2([0,1]^p)$.

We establish \eqref{eq:lower-bound} for a wide class of tree-based models by deriving sharp small-ball probabilities in the $\|\cdot\|_\infty$ norm around the true regression function $f_0$. To be general, we consider any gating function $\psi:\,\mathbb{R}\to \mathbb{R}$ satisfying the following assumption.

% \vspace{1em}

\paragraph{{\bf Assumption G} (gating function):} Let $K=\psi(1-\psi)$ be an ``effective'' kernel function associated with gating function $\psi$ such that $\sup_{x\in\Reals} |\psi'(x)| < \infty$. \vspace{-0.5em}
\begin{enumerate}\itemsep0em
\item $\displaystyle \int_{-\infty}^{+\infty} K(x)\, dx >0$ and for any positive integer $m$, $\displaystyle \int_{-\infty}^{+\infty} |x|^m\,|K(x)|\, dx <\infty$.
\item The function $K$ can be extended to a uniformly bounded analytic function on  the strip $\mathcal{S}(\rho) = \big\{z=x+\sqrt{-1}\, y\in \mathbb{C}:\,(x,\,y)\in\mathbb{R}^2,\, |y|\leq \rho\big\}$ in the complex plane for some constant $\rho>0$.
\end{enumerate}

% It is easy to verify that the logistic gating function $\psi(x)=(1+e^{-x})^{-1}$ satisfies Assumption G. More generally, it is not required that $\psi(x)$ is non-negative or bounded by $1$.

Recall that $\mu_{t\ell}$ is the value assigned to leaf $\ell$ of tree $t$, for $\ell=1,2,\ldots,L_t$ and $t=1,\ldots, T$, and $\tau_b$ is the bandwidth parameter associated with branch $b$. Our first result shows that any smooth function can be approximated by a sum of soft decision trees taking form~\eqref{Eqn:SumOfTree} in a way such that the number of trees $T$ and the approximation error are optimally balanced. This lemma is interesting in its own right since it indicates that any $d$-dimensional smooth function can be approximated within error $\varepsilon$ by using at most poly$(\varepsilon^{-1})$ many properly re-scaled logistic activation functions.

\begin{lemma}[Approximation by sum of soft decision trees]
  \label{lem:half-gaus}
  Suppose Assumption G holds for the gating function $\psi$. For any function
  $f_0 \in \sC^{\alpha,R}([0,1]^d)$, any $\epsilon > 0$, and $\tau > 0$, there
  exists a sum of soft decision trees with a single bandwidth $\tau_b\equiv
  \tau$ for all branches,
  \begin{align*}
    \widetilde f(x) =  \sum_{t = 1}^T g(x;  \widetilde \Tree_t,  \widetilde \sM_t), \qquad x \in \Reals^p,
  \end{align*}
  where each tree $\widetilde{\Tree}_t$ has at most $2d$ branches, $T \le C_1\tau^{-d}\log^d(1/\epsilon)$, $\sum_{t,\ell} |\widetilde \mu_{t\ell}| \le C_1\,\tau^{-d}\,\|f_0\|_\infty$, and 
  \begin{align*}
    \| \widetilde f - f_0\|_\infty \le D_1\, R\, (\tau^\alpha + \varepsilon \, \tau^{-d}),
  \end{align*}
  where $C_1$ and $D_1$ are constants independent of $(\varepsilon, \tau)$.
  
  % where each tree $\widetilde \Tree_t$ has depth at most $2d$ (implying $L_t
  % \leq 2^{2d}$), for $t = 1, \ldots, T$ with $T$ at most $C_1 \tau^{-d}
  % \log^d(1/ \varepsilon)$, such that $\sum_{t,\ell} |\mu_{t\ell}| \le
  % C_1\,\tau^{-d}\,\|f_0\|_\infty$ and,
\end{lemma}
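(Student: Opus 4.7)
The overall strategy is to reduce the approximation problem for a sum of soft decision trees to a classical kernel approximation problem for H\"older functions. The key structural observation is that a single soft tree with exactly $2d$ internal branches can reproduce a scaled translate of the product kernel $\prod_{j=1}^{d} K\bigl((x_j - C_j)/\tau\bigr)$. I would build a ``caterpillar''-shaped tree whose unique deep root-to-leaf path contains, for each coordinate $j\in\{1,\dots,d\}$, two consecutive splits on $x_j$ at the common cutoff $C_j$. The specific leaf reached by going right at the first of each pair and left at the second then carries weight $\prod_{j=1}^d \psi\bigl((x_j - C_j)/\tau\bigr)\bigl(1-\psi\bigl((x_j - C_j)/\tau\bigr)\bigr) = \prod_{j=1}^d K\bigl((x_j - C_j)/\tau\bigr)$. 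Assigning value $\mu$ to that designated leaf and $0$ to every other leaf gives the single-tree function $\mu\prod_j K\bigl((x_j-C_j)/\tau\bigr)$. Thus it suffices to approximate $f_0$ by a finite linear combination of such translates of a product kernel, with at most $T$ terms whose coefficients satisfy the stated $\ell^1$ bound.

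Next, I would carry out the kernel approximation in three substeps. First, promote $K$ to a higher-order product kernel $\overline K$ of order $\lceil\alpha\rceil$ by a finite signed combination of rescaled copies of $K$; Assumption G(1) guarantees this construction is valid, and each signed rescaled copy contributes only an $O(1)$ multiplicative overhead in the number and magnitude of trees. By classical bias estimates for H\"older functions, $\|f_0 - f_0\star\overline K_\tau\|_\infty \le C\,R\,\tau^\alpha$, after extending $f_0$ to $\mathbb{R}^d$ while preserving its H\"older norm (Whitney extension). Second, truncate: Assumption G(2) combined with contour-shifting inside the strip $\mathcal S(\rho)$ yields exponential decay $|K(x)|\lesssim e^{-\rho|x|}$, so the convolution can be restricted to the region $\{y:\|x-y\|\le C\tau\log(1/\varepsilon)\}$ at the cost of an $O(\varepsilon)$ error. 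Third, discretize the truncated integral by a Riemann/quadrature rule on a uniform grid $\{y_i\}\subset[0,1]^d$ with spacing $h\asymp \tau/\log(1/\varepsilon)$; the analyticity of $K$ makes this quadrature highly accurate, leading to a total discretization error of $O(R\,\varepsilon\,\tau^{-d})$ in $\|\cdot\|_\infty$.

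Finally, I would verify the counting and magnitude bounds. The number of grid points inside $[0,1]^d$ that contribute non-negligibly is $O\bigl((\tau\log(1/\varepsilon)/h)^d\bigr)$ per localization window, and after combining with the higher-order kernel construction and translating each grid point into one tree, the total number of trees is $T = O\bigl(\tau^{-d}\log^d(1/\varepsilon)\bigr)$. Each leaf coefficient takes the form $\mu_i = h^d\tau^{-d}f_0(y_i)$ (Riemann weight times kernel normalization times sample value), so
\[
\sum_{t,\ell}|\widetilde\mu_{t\ell}| \;\lesssim\; h^d\tau^{-d}\sum_i |f_0(y_i)| \;\lesssim\; h^d\tau^{-d}\cdot h^{-d}\|f_0\|_\infty \;=\; \tau^{-d}\|f_0\|_\infty,
\]
which is the claimed bound. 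Adding the bias $O(R\tau^\alpha)$ and the discretization error $O(R\varepsilon\tau^{-d})$ gives the stated sup-norm bound.

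The principal obstacle is obtaining the clean $O(\varepsilon\tau^{-d})$ discretization error rather than a factor inflated by further powers of $\log(1/\varepsilon)$; this requires fully exploiting the analyticity assumption G(2) — naive Riemann-sum estimates would leave behind polylogarithmic losses that must be absorbed by a sharper quadrature bound using the analytic extension on $\mathcal S(\rho)$. A secondary technicality is extending $f_0$ off $[0,1]^d$ without enlarging the H\"older norm beyond a constant factor, so that the convolution bias estimate is applicable; this is handled by standard Whitney-type extensions and does not affect the leading-order rate.
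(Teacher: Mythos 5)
Your overall architecture matches the paper's: smooth $f_0$ by convolution with the product kernel $K^{(d)}_\tau$ built from $K=\psi(1-\psi)$, discretize the convolution into $T\lesssim \tau^{-d}\log^d(1/\varepsilon)$ translates of the kernel, and realize each translate $\mu_t K^{(d)}_\tau(\cdot-x_t)$ as a single tree with two splits per active coordinate whose designated leaf carries $\prod_j\psi\bigl((x_j-x_{t,j})/\tau\bigr)\bigl(1-\psi\bigl((x_j-x_{t,j})/\tau\bigr)\bigr)$ and whose other leaves are zero. That last step is exactly the paper's Step 3, and your $\ell^1$ accounting for the leaf coefficients is in order. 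However, your middle step has a genuine gap. A uniform grid with spacing $h\asymp\tau/\log(1/\varepsilon)$ and Riemann weights $h^d\tau^{-d}g(y_i)$ cannot deliver an $O(\varepsilon\tau^{-d})$ discretization error: the integrand $y\mapsto K_\tau(x-y)g(y)$ is only as smooth as $g$, which is merely H\"older (and, after the bias-correction, only bounded), so any quadrature error bound is limited to a power of $h$ determined by the regularity of $g$, roughly $h^{\min(\alpha,1)}\tau^{-d}$ — polynomial in $1/\log(1/\varepsilon)$, not $\varepsilon$. No ``sharper quadrature bound using analyticity of $K$'' repairs this, because the loss comes from $g$, not from $K$. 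The paper's Step 2 avoids quadrature entirely: following Lemma 3.1 of Ghosal and van der Vaart (2001) (and Theorem 7 of Shen et al.), one replaces the mixing measure $g(y)\,dy$ by a discrete signed measure whose moments up to order $k\asymp\log(1/\varepsilon)$ agree with it on each cell of side $\tau$; the error is then controlled by the tail of the Taylor series of $K$, and this is precisely where Assumption G(2) enters — Cauchy's integral formula on the strip $\mathcal S(\rho)$ gives $|K^{(k)}_\kappa(0)/k!|\le D e^{-C_2 k}$. Moment matching, not grid quadrature, is the missing idea.

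Two further points. First, your route to the $\tau^\alpha$ bias via a higher-order kernel $\overline K=\sum_j c_j K_{a_j\tau}$ uses several distinct bandwidths $a_j\tau$, which contradicts the lemma's assertion of a single bandwidth $\tau_b\equiv\tau$ for all branches; the paper instead keeps the kernel fixed and convolves with a Taylor-corrected function $T_{b,\tau}f_0$ with $\|T_{b,\tau}f_0\|_\infty\le M_1 R$ (Lemma 3.4 of de Jonge and van Zanten, 2010), which preserves the single bandwidth and still yields $\|K^{(d)}_\tau * g - f_0\|_\infty\le M_2R\,\tau^\alpha$. (Your variant would still suffice for the downstream prior-concentration argument, since the prior allows tree-specific bandwidths, but it does not prove the lemma as stated.) Second, your truncation step asserts $|K(x)|\lesssim e^{-\rho|x|}$ from boundedness and analyticity on a strip; that is false in general (consider $1/(1+x^2)$), and Assumption G only guarantees finite moments of all orders. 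The moment-matching argument does not need this truncation, which is another reason to adopt it.
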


With the help of this lemma, we establish \eqref{eq:lower-bound} as a direct consequence of the following result, where we make the following assumptions on the prior distribution.

% \vspace{1em}

\paragraph{{\bf Assumption P} (prior conditions):} 
\begin{enumerate}
\item[(P1)] There exists some constants
  $(C_1,C_2)$ such that the prior distribution on number of trees $T$ satisfies
  % \begin{align*}
    % \Pi(T=t) \geq C_1\,\exp\{-C_2\, t\},\quad\mbox{for $t=0,1,2,\ldots$}.
  % \end{align*}
  \begin{math}
    \Pi(T=t) \geq C_1\,\exp\{-C_2\, t\} ~ \mbox{for $t=0,1,2,\ldots$}.
  \end{math}
\item[(P2)] The prior density $\pi_\tau$ of tree specific bandwidth parameters $\tau_t$ satisfies $\pi_\tau(\tau)\geq a_1\tau^{a_2}$ for some constants $a_1, a_2 > 0$ for all sufficiently small $\tau$.
\item[(P3)] The prior on the splitting proportion vector $s$ is $\Dirichlet(a/p^{\xi}, \ldots, a/p^{\xi})$ for some $\xi > 1$ and $a>0$.
\item[(P4)] The leaf coefficients \(\mu_{t\ell}\) are iid with density $\pi_\mu$ where $\pi_\mu(\mu) \ge B_1 e^{-B_2 |\mu|}$ for all $\mu$ and some positive constants $B_1, B_2$.
\item[(P5)] $\Pi(D_t = k) > 0$ for $k=0,1,\ldots,2d$, where \(D_t\) denotes the depth of tree $t$ and $d$ is as in Theorem~\ref{thm:prior_thick}.
\end{enumerate}

\begin{remark}
  Condition P1 is very weak and is satisfied, for example, by setting $T \sim \Geometric(\pi_T)$. Similarly, P2 is satisfied by our choice of \(\tau_t \sim \Exponential(r)\). Condition P4, which assumes that the \(\mu_{t\ell}\)'s have sufficiently heavy tails, is adopted for the simplicity of the iid assumption, but can be weakened to allow for the hierarchical model in which $\mu_{t\ell} \sim \Normal(0,\sigma^2_\mu / T)$ with $\sigma_{\mu} \sim \Cauchy_+(0,\sigma_\sigma)$.
\end{remark}

\begin{remark}
  In the supplementary material we show that under extra technical conditions on the prior, the usual posterior (fractional posterior with $\eta=1$) can attain the same rate of convergence as in Theorem~\ref{Thm:main} below. These extra conditions are needed to control the size of the effective support of the prior and show the existence of a certain sieve \citep{ghosal2000convergence}. In particular, Assumption P only needs certain lower bounds on the prior density (mass) functions, while Assumption SP in the supplementary material requires some upper bound on the tail prior probability of various parameters in the model.
\end{remark}

 % Again, we decide adopting the fractional posterior framework in this section to simplify the presentation without losing key insight.
% ; our choices of the priors of $\mu$ and $\tau$ also satisfy Assumption P.

\begin{theorem}(Prior concentration for sparse function)\label{thm:prior_thick}
  Suppose that Assumptions G and P are satisfied. Let $f_0\in\mathcal C^{\alpha,
    R}([0,1]^p)$ be a bounded regression function that depends on at most $d$
  covariates. Then there exists constants $A$ and $C$ independent of $(n, p)$
  such that for all sufficiently large $n$, the prior $\Pi$ over regression
  function $f$ satisfies
\begin{align*}
\Pi\Big[ \|f-f_0\|_\infty \leq A\, \varepsilon_n\Big] \geq \exp\big(-Cn\varepsilon_n^2\big),
\end{align*}
where $\varepsilon_n =n^{-\alpha/(2\alpha+d)}(\log n)^t + \sqrt{n^{-1} \,d\, \log p}$ for any $t\geq \alpha(d+1)/(2\alpha+d)$. 
\end{theorem}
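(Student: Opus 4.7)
The plan is to reduce the small-ball problem to a deterministic approximation supplied by Lemma~\ref{lem:half-gaus} and then lower-bound the prior mass placed on a neighborhood of the latent parameters realizing this approximation. Since $f_0$ depends only on $d$ coordinates, Lemma~\ref{lem:half-gaus} applies to $f_0$ viewed as an element of $\mathcal{C}^{\alpha,R}([0,1]^d)$. I would choose the bandwidth $\tau \asymp \varepsilon_n^{1/\alpha}$ and the slack parameter $\epsilon \asymp \varepsilon_n \tau^d$ so that the two error terms in Lemma~\ref{lem:half-gaus} are balanced at order $\varepsilon_n$; this produces an approximant $\widetilde f = \sum_{t=1}^{T^\star} g(\,\cdot\,; \widetilde \Tree_t, \widetilde \sM_t)$ with $\|\widetilde f - f_0\|_\infty \le A_0 \varepsilon_n$, $T^\star \lesssim \tau^{-d} \log^d(1/\epsilon)$ trees each of depth at most $2d$, common bandwidth $\tau$, and total leaf magnitude $\sum_{t,\ell} |\widetilde \mu_{t\ell}| \lesssim \tau^{-d}$.

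Next, I would define a prior event $E$ on which the random ensemble $f$ realizes this structure up to small perturbations: $T = T^\star$; each tree topology matches the prescribed shape (of depth at most $2d$); every branch splits on the prescribed variable in the active set $S \subset \{1,\ldots,p\}$ of size $d$; and each cutpoint $C_b$, bandwidth $\tau_t$, and leaf $\mu_{t\ell}$ lies within polynomially small tolerances $\delta_c,\delta_\tau,\delta_\mu$ of its target. A uniform Lipschitz argument using $\sup|\psi'|<\infty$ (Assumption G), together with the fact that each leaf weight $\phi(x;\Tree,\ell)$ is a product of at most $2d$ gating factors along the root-to-leaf path, propagates these parameter perturbations through~\eqref{eq:proper} and delivers $\|f-\widetilde f\|_\infty \le A_0\varepsilon_n$ on $E$ provided $\delta_c,\delta_\tau,\delta_\mu$ are taken of order $\varepsilon_n/T^\star$ times appropriate powers of $\tau$. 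The triangle inequality then yields $\|f-f_0\|_\infty \le A\varepsilon_n$ on $E$.

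The third step is to lower-bound $\Pi(E)$ factor by factor. The prior on $T$ gives $e^{-CT^\star}$ by (P1); the branching-process prior combined with (P5) contributes $e^{-CT^\star}$ for the $T^\star$ topologies; the uniform prior on cutpoints contributes $e^{-CT^\star d\log(1/\delta_c)}$; the bandwidth prior contributes $e^{-CT^\star\log(1/\tau)}$ by (P2); and the leaf prior yields at least $\exp(-CT^\star d\log(1/\delta_\mu) - B_2\sum_{t,\ell}|\widetilde\mu_{t\ell}|) \ge \exp(-CT^\star d\log(1/\varepsilon_n) - C'\tau^{-d})$ by (P4). The sparsity prior on $s$ contributes a factor of order $\exp(-Cd\log p)$, obtained from the exact Dirichlet--multinomial marginal $\Pi(\text{splits as specified}) = \Gamma(\alpha_0)[\Gamma(\alpha_0+B)]^{-1}\prod_{j\in S}\Gamma(\alpha_j+c_j)/\Gamma(\alpha_j)$ with $\alpha_j=a/p^\xi$, $\alpha_0=a/p^{\xi-1}$, and $B \lesssim T^\star d$, by applying the small-$\alpha$ asymptotic $\Gamma(\alpha)/\Gamma(\alpha+m)\sim 1/[\alpha(m-1)!]$. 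Summing the log-contributions, the dominant term is $T^\star d\log n \asymp n^{d/(2\alpha+d)} d (\log n)^{1+d-td/\alpha}$; requiring this to be $\lesssim n\varepsilon_n^2 = n^{d/(2\alpha+d)}(\log n)^{2t}$ forces precisely the stated condition $t\ge\alpha(d+1)/(2\alpha+d)$, while the $d\log p$ term is absorbed by the second summand of $\varepsilon_n^2$.

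The main obstacle is the Dirichlet--multinomial computation for the sparsity prior: one must carefully isolate the $d\log p$ penalty from the many combinatorial factors, and this is where $\xi>1$ in (P3) is essential, since it forces $\alpha_0=a/p^{\xi-1}\to 0$ and thereby induces the sparsity-favoring boundary behavior of the Dirichlet that allows $B$ splits to concentrate on the $d$ active variables at only logarithmic cost. A secondary technical task is the multi-parameter Lipschitz bound relating perturbations in $(C_b,\tau_t,\mu_{t\ell})$ to sup-norm changes in $f$; because the gating factors enter multiplicatively along each path, one needs a uniform bound that does not blow up in $1/\tau$ faster than permitted by the prior tails, which is delivered by Assumption G together with the depth bound $\le 2d$.
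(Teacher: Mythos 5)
Your proposal is correct and follows essentially the same route as the paper's proof: invoke Lemma~\ref{lem:half-gaus} with $\tau^\alpha \asymp \varepsilon/\tau^d \asymp \varepsilon_n$, match the number of trees, topologies, and splitting coordinates on a prior event, propagate perturbations in cutpoints, bandwidths, and leaf values through a Lipschitz bound to get $\|f-\widetilde f\|_\infty \lesssim \varepsilon_n$, and then multiply the factorwise prior lower bounds, with the exponent bookkeeping yielding exactly the condition $t \ge \alpha(d+1)/(2\alpha+d)$. The only substantive difference is the treatment of the sparsity prior: you integrate out $s$ via the Dirichlet--multinomial marginal with small-$\alpha$ Gamma asymptotics, whereas the paper first lower-bounds $\Pi\bigl[s_j \ge (2d)^{-1},\, j\le d,\ \sum_{j>d}s_j \le d^{-1}\bigr] \ge e^{-Cd\log p}$ via Theorem~2.1 of \cite{yang2014minimax} and then pays a $d^{-N}$ factor for selecting the correct coordinates conditionally on that event; both yield the same $e^{-Cd\log p}$ contribution, so this is an equivalent and equally valid sub-argument.
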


The following posterior concentration rate for sparse functions follows immediately from Theorem~\ref{thm:prior_thick} and Theorem 3.2 in \cite{bhattacharya2016bayesian} (see also Section 4.1 therein).

\begin{theorem}[Posterior convergence rate for sparse truth]\label{Thm:main}
  Suppose that Assumptions G and P are satisfied. Let $f_0\in\mathcal C^{\alpha,
    R}([0,1]^p)$ be a bounded regression function that only depends on at most
  $d$ covariates. If $n\,\varepsilon_n^2\to\infty$ and $\varepsilon_n\to 0$ as $n,\,p\to\infty$, then for all sufficiently large
  constant $M>0$, we have
\begin{align*}
  \Pi_{n,\eta}\Big[ \|f-f_0\|_n\geq M\, \varepsilon_n\Big] \to 0,
  \quad \text{in probability as $n,\,p \to \infty$},
\end{align*}
where $\varepsilon_n =n^{-\alpha/(2\alpha+d)}(\log n)^t + \sqrt{n^{-1} \,d\, \log p}$ for any $t\geq \alpha(d+1)/(2\alpha+d)$. 
\end{theorem}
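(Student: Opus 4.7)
The plan is to deduce the theorem directly from the prior concentration result of Theorem~\ref{thm:prior_thick} by invoking the general fractional posterior contraction theorem of \cite{bhattacharya2016bayesian}. A key reason for working with the fractional posterior $\Pi_{n,\eta}$ with $\eta<1$ is that it bypasses the testing/sieve conditions of the classical \cite{ghosal2000convergence} framework and requires essentially a single prior mass bound on a Kullback--Leibler-type neighborhood of $f_0$. This is exactly why Assumption P is stated entirely as lower bounds on prior densities/mass, without any tail control.

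First, I would verify the KL prior mass condition \eqref{eq:lower-bound}. As noted in the paragraph preceding Assumption G, in the Gaussian regression model with known $\sigma$ one has the inclusion $\{f:\|f-f_0\|_\infty \le 2\sigma\varepsilon\}\subset B_\varepsilon(f_0)$. Theorem~\ref{thm:prior_thick} therefore yields, for the stated $\varepsilon_n$,
\[
\Pi\bigl(B_{\varepsilon_n}(f_0)\bigr)\;\ge\;\Pi\bigl(\|f-f_0\|_\infty \le 2\sigma\varepsilon_n\bigr)\;\ge\;\exp(-C'\,n\,\varepsilon_n^2),
\]
after possibly enlarging the constants. This is the only prior ingredient that Theorem 3.2 of \cite{bhattacharya2016bayesian} requires.

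Next, I would apply that theorem, which, under \eqref{eq:lower-bound} together with $n\varepsilon_n^2\to\infty$, ensures that for any $\eta\in(0,1)$, with probability tending to one,
\[
\Pi_{n,\eta}\!\left\{ f :\ \tfrac{1}{n}\sum_{i=1}^{n} D_\eta\!\bigl(p_{f_0}^{(i)}\|\,p_f^{(i)}\bigr)\;\ge\;M_0\,\varepsilon_n^2 \right\}\;\to\;0,
\]
where $D_\eta$ denotes the R\'enyi divergence of order $\eta$. The final step is a routine Gaussian calculation: for two univariate normals with common variance $\sigma^2$ and means $f_0(X_i)$ and $f(X_i)$, one has $D_\eta(p_{f_0}^{(i)}\|\,p_f^{(i)}) = \tfrac{\eta}{2\sigma^2}(f_0(X_i)-f(X_i))^2$, so the averaged R\'enyi divergence equals $\tfrac{\eta}{2\sigma^2}\|f-f_0\|_n^2$. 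Substituting this identity into the previous display and absorbing the factor $\eta/(2\sigma^2)$ into the constant $M$ yields \eqref{eq:consistent} at rate $\varepsilon_n$, which is the conclusion of Theorem~\ref{Thm:main}.

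There is no real obstacle here beyond bookkeeping: the substantive work has been discharged in Theorem~\ref{thm:prior_thick} (and Lemma~\ref{lem:half-gaus}, which supplies the tree-based approximation of a smooth $d$-variate function). The only subtle point is ensuring that $\varepsilon_n\to 0$ with $n\varepsilon_n^2\to\infty$, which is why both conditions appear explicitly and implicitly constrain how fast $d$ and $\log p$ may grow relative to $n$; these conditions guarantee that the exceptional events in the fractional posterior theorem have vanishing probability.
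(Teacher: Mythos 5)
Your proposal is correct and follows essentially the same route as the paper, which likewise derives Theorem~\ref{Thm:main} immediately from the prior concentration bound of Theorem~\ref{thm:prior_thick} combined with Theorem 3.2 of \cite{bhattacharya2016bayesian}, using the inclusion of the sup-norm ball in the KL neighborhood noted before Assumption G. Your added details (the Gaussian R\'enyi divergence identity reducing the averaged divergence to $\tfrac{\eta}{2\sigma^2}\|f-f_0\|_n^2$) are exactly the bookkeeping the paper leaves implicit.
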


This result shows a salient feature of our sum of soft decision trees model --- by introducing the soft thresholding, the resulting posterior contraction rate adapts to the unknown smoothness level $\alpha$ of the truth $f_0$, attaining a near-minimax rate \citep{yang2015minimax} without the need of knowing $\alpha$ in advance. Our next result shows that if the truth admits a sparse additive structure $f_0=\sum_{v=1}^V f_{0,v}(x)$, where each additive component $f_{0,v}(x)$ is sparse and only depends on $d_v$ covariates for $v=1,\ldots, V$, then the posterior contraction rate also adaptively (with respect to both the additive structure and unknown smoothness of each additive component) attains a near-minimax rate \citep{yang2015minimax} up to $\log n$ terms, which leads to a second salient feature of the sum of soft decision tree model --- it also adaptively learns any unknown lower order nonlinear interactions among the covariates.

\begin{theorem}[Posterior convergence rate for additive sparse truth]\label{Thm:main2}
Suppose that Assumptions G and P are satisfied. Let $f_0=\sum_{v=1}^V f_{0,v}$, where the $v$th additive component $f_{0,v}$ belongs to $\mathcal C^{\alpha_v, R}([0,1]^p)$, and is bounded and only depends on at most $d_v$ covariates for $v=1,\ldots,V$. If $n\,\varepsilon_n^2\to\infty$ and $\varepsilon_n\to 0$ as $n,\,p\to\infty$, then for all sufficiently large constant $M>0$, we have
\begin{align*}
  \Pi_{n,\eta}\Big[ \|f-f_0\|_n\geq M\, \varepsilon_n\Big] \to 0,
  \quad \mbox{in probability} \quad \mbox{as $n,\,p\to\infty$},
\end{align*}
where $\varepsilon_n =\sum_{v=1}^Vn^{-\alpha_v/(2\alpha_v+d_v)}(\log n)^t + \sum_{v=1}^V \sqrt{n^{-1} \,d_v\, \log p}$ for any $t\geq \max_v \alpha_v(d_v+1)/(2\alpha_v+d_v)$.
\end{theorem}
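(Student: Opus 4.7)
The plan is to reduce Theorem~\ref{Thm:main2} to a prior thickness statement, exactly mirroring how Theorem~\ref{Thm:main} is derived from Theorem~\ref{thm:prior_thick} via Theorem~3.2 of \cite{bhattacharya2016bayesian}. Hence it suffices to verify that
\begin{align*}
  \Pi\bigl(\|f - f_0\|_\infty \le A\varepsilon_n\bigr) \ge \exp(-C n \varepsilon_n^2),
\end{align*}
with $\varepsilon_n = \sum_{v=1}^V n^{-\alpha_v/(2\alpha_v+d_v)}(\log n)^t + \sum_v \sqrt{n^{-1} d_v \log p}$. Once this is in hand, the contraction statement follows from the same fractional-posterior machinery cited after Theorem~\ref{thm:prior_thick}, with no new probabilistic ingredients.

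The first step is to build a target sum-of-trees $\widetilde f = \sum_v \widetilde f_v$ that approximates $f_0$ in $\|\cdot\|_\infty$. I apply Lemma~\ref{lem:half-gaus} to each component $f_{0,v} \in \mathcal C^{\alpha_v,R}([0,1]^{d_v})$ separately, choosing the bandwidth $\tau_v \asymp \varepsilon_{n,v}^{1/\alpha_v}$ so that the bias term $\tau_v^{\alpha_v}$ matches the target per-component rate $\varepsilon_{n,v} := n^{-\alpha_v/(2\alpha_v+d_v)}(\log n)^t$, and picking the internal accuracy $\varepsilon$ polynomially small in $n$ so that $\varepsilon \tau_v^{-d_v}$ is dominated. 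The lemma then supplies $\widetilde f_v$ made of $T_v \lesssim \tau_v^{-d_v}\log^{d_v}(1/\varepsilon)$ trees of depth at most $2 d_v$, with $\|\widetilde f_v - f_{0,v}\|_\infty \lesssim R\, \varepsilon_{n,v}$. Concatenating the ensembles produces a single sum-of-trees model with $T^\star = \sum_v T_v$ trees, and the triangle inequality gives $\|\widetilde f - f_0\|_\infty \lesssim \sum_v \varepsilon_{n,v}$.

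The second step is to lower-bound the prior mass of a small $\|\cdot\|_\infty$-neighborhood of $\widetilde f$, as in Theorem~\ref{thm:prior_thick}. I restrict attention to the ``target event'' where (i) the total number of trees equals $T^\star$ (cost $e^{-C_2 T^\star}$ by P1), (ii) the depth and topology of each tree match those produced by Lemma~\ref{lem:half-gaus} for the component it represents (constant cost per tree by P5 and the branching-process prior), (iii) each bandwidth $\tau_t$ lies in a shrinking window around the appropriate $\tau_v$ (cost controlled by P2), (iv) every split variable lies in the union $\bigcup_v S_v$ of relevant coordinates, of total size $D := \sum_v d_v$, and (v) each leaf coefficient $\mu_{t\ell}$ is pinned to a box of half-width $\delta_\mu$ around $\widetilde\mu_{t\ell}$ (cost $e^{-c T^\star \log(1/\delta_\mu)}$ by P4). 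For the splitting proportion vector $s \sim \Dirichlet(a/p^\xi,\ldots,a/p^\xi)$, the same Dirichlet concentration bound used in Theorem~\ref{thm:prior_thick} gives $\Pi\bigl(s_j \ge c/D \text{ for all } j \in \bigcup_v S_v\bigr) \ge e^{-c D \log p}$, and on this event each split lands in the correct component-specific subset $S_{v(t)}$ with probability of order $|S_{v(t)}|/D$, contributing a factor that is absorbed into the $T^\star \log T^\star$ budget. Summing the logs of these pieces gives a total cost of order $T^\star \log(1/\varepsilon_n) + D\log p$, which by the choice of $\tau_v$'s matches $n\varepsilon_n^2$ up to constants and log factors already bundled into $t$.

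The main obstacle is the coupling introduced by the single global splitting-proportion vector $s$, which must simultaneously support the distinct sparse sets $S_1,\ldots,S_V$. I handle this by aggregating them into the single active set $\bigcup_v S_v$ of size $D = \sum_v d_v$ and paying one combined variable-identification cost $e^{-cD\log p}$; since $\sum_v n^{-1} d_v \log p \le n^{-1} D \log p$, this cost is compatible with $n\varepsilon_n^2$. A parallel bookkeeping subtlety is that the approximation contribution $(\sum_v \varepsilon_{n,v})^2$ in the exponent is controlled by $V \sum_v \varepsilon_{n,v}^2$, whose dominant component is absorbed into the logarithmic exponent $t$ in $\varepsilon_n$; thus both the smoothness-adaptation and the sparsity-adaptation costs add rather than multiply across components, yielding exactly the additive rate claimed in the theorem.
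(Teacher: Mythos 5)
Your proposal follows the paper's proof essentially verbatim: reduce to prior thickness via Theorem~3.2 of \cite{bhattacharya2016bayesian}, apply Lemma~\ref{lem:half-gaus} componentwise with bandwidths $\tau_v$ matched to each $(\alpha_v, d_v)$, exploit the additive structure of the sum-of-trees prior conditional on the total tree count, and pay a single Dirichlet concentration cost $e^{-C(\sum_v d_v)\log p}$ for the union of active coordinate sets. The only slip is cosmetic: the bound you actually need is the trivial $\sum_v \varepsilon_{n,v}^2 \le (\sum_v \varepsilon_{n,v})^2$ (so the summed per-component costs are dominated by $n\varepsilon_n^2$), not the Cauchy--Schwarz direction with the factor $V$ that you invoke.
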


\section{Illustrations}
\label{sec:illustrations}

\subsection{Friedman's example}
\label{sec:friedman}

A standard test case, initially proposed by \citet{friedman1991multivariate}
\citep[see also][]{chipman2010bart},
sets
\begin{align}
  \label{eq:friedfun}
  f_0(x) = 10 \sin(\pi x_1 x_2) + 20 (x_3 - 0.5)^2 + 10 x_4 + 5 x_5.
\end{align}
This $f_0(x)$ features two nonlinear terms, two linear terms, with a nonlinear interaction.

In this experiment, we consider \(n = 250\) observations, \(\sigma^2 \in \{1, 10\}\), and \(p\) from \(5\) to \(1000\) along an evenly-spaced grid on the scale of \(\log p\). We compare SBART to BART, DART, gradient boosted decision trees (\texttt{xgboost}), the lasso (\texttt{glmnet}), and random forests (\texttt{randomForest}). A similar experiment was conducted by \citet{linero2016bayesian}, who showed that the sparsity inducing prior used by DART resulted in substantial performance gains over BART. The purpose of this experiment is to demonstrate the further gains which are possible when the smoothness of \eqref{eq:friedfun} is also leveraged.

% While BART performs well on this example, we show that there is substantial room
% for improvement. Much of the expressive power of BART is used to capture
% relatively simple structures, such as the linear terms for \(x_4\) and \(x_5\).

\begin{figure}[t]
  \centering
  \includegraphics[width = .77\textwidth]{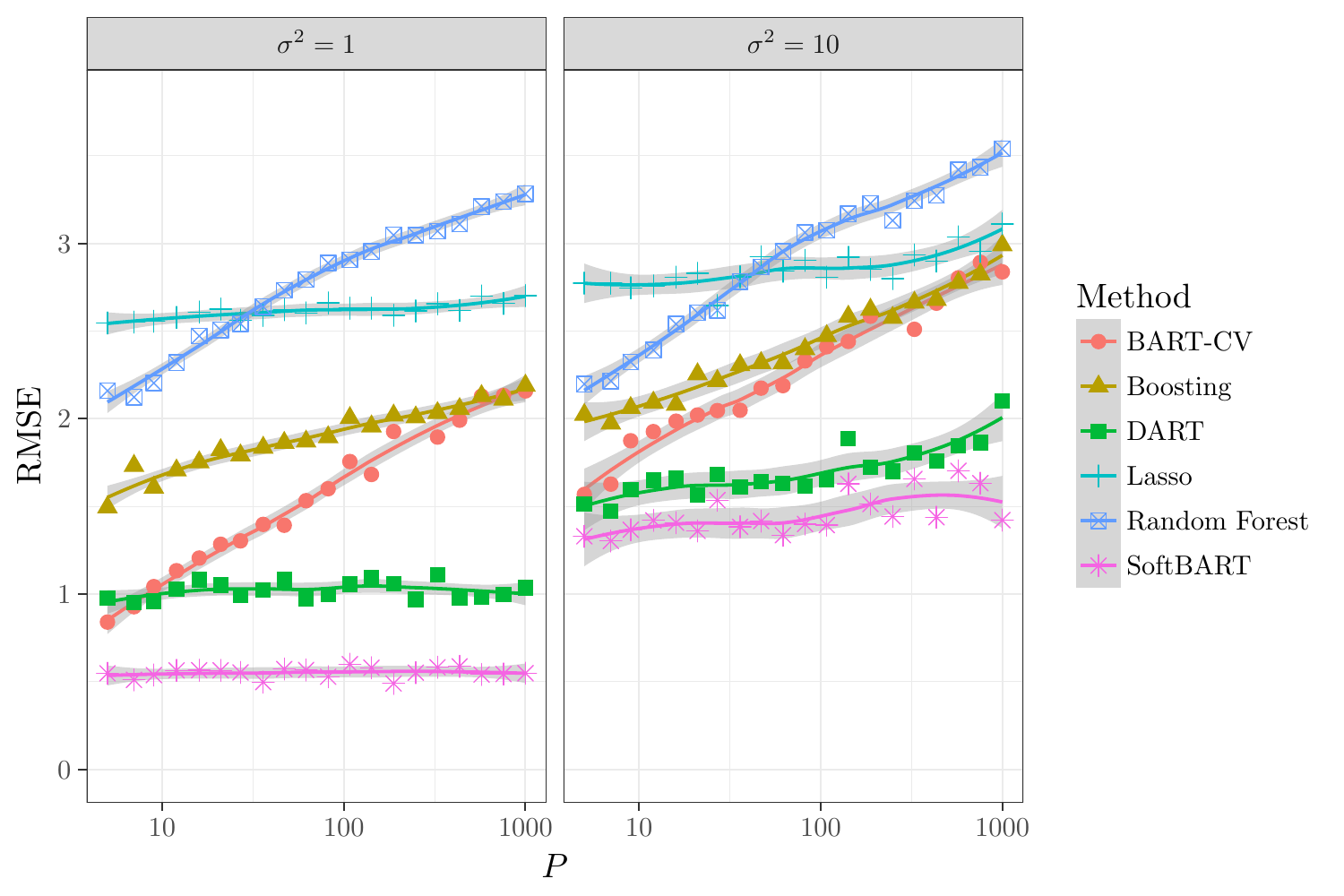}
  \caption{Average root mean squared error of various methods, as a function of
    the dimension \(P\) of the predictor space. To aide visualization, we also
    give a loess smooth with Monte-Carlo standard error.%  Note that, due to the
    % use of the same simulated datasets, the standard error for between-method
    % comparisons is smaller than indicated by the loess smooth.
  }
  \label{fig:pgrow}
\end{figure}

Methods are compared by root mean-squared error,
\begin{math}
% \operatorname{RMSE} = \sqrt{E[\{f(X) - \widehat f(X)\}^2)]},
\operatorname{RMSE} = \{\int \{f(x) - \widehat f(x)\}^2 \ dx\} ^ {1/2},
\end{math}
which is approximated by Monte-Carlo integration. For the Bayesian procedures, we take \(\widehat f\) to be the pointwise posterior mean of \(f\). DART, and SBART use their respective default priors and were fit using 2500 warmup iterations and 2500 sampling iterations, while cross-validation is used to tune the hyperparameters for BART. The non-Bayesian methods were tuned using cross validation for each replication of the experiment.

Results are given in Figure~\ref{fig:pgrow}. Among the methods considered, SBART performs the best, obtaining a sizeable improvement over DART in both the low noise and high noise settings. Due to the use of a sparsity-inducing prior, both DART and SBART are largely invariant to the number of nuisance predictors, while random forests, BART-CV, and boosting have errors increasing in \(\log p\). The lasso also has stable, albeit poor, performance as \(p\) increases.

We now compare SBART to DART for the task of variable selection (see \citealp{linero2016bayesian} for a detailed comparison of DART, BART, random forests, and the lasso which found DART to perform best among these methods). Our goal is to assess whether leveraging smoothness can improve on the good variable selection properties of DART. We modify Friedman's function, taking instead
\begin{align*}
  f(x) = 10 \sin(\pi x_1 x_2) + 20(x_3 - 0.5)^2 + \lambda(10 x_4 + 5 x_5),
\end{align*}
where $\lambda$ is a tuning parameter for the simulation. A variable is included if its posterior inclusion probability exceeds 50\%. We consider $\lambda \in [0.1, 1]$. As measures of accuracy, we consider precision $ = TP / (TP + FP)$, recall $= TP / (TP + FN)$, and $F_1$ score (harmonic mean of precision and recall), where $TP, FP$ and $FN$ denote the number of true positives, false positives, and false negatives respectively.

Results for $20$ replications and $\sigma^2 = 1$ are given in Figure~\ref{fig:varselect}, along with the average RMSE. First, we see that both DART and SBART have a precision which is roughly constant in $\lambda$, with SBART performing uniformly better. This makes intuitive sense, as varying $\lambda$ should have little influence on whether irrelevant predictors are selected. The precision of both methods is heavily dependent on $\lambda$, and we see that SBART is generally capable of detecting smaller signal levels; at its largest, the difference in recall is about 10\%. Once the signal level is high enough, both methods detect all relevant predictors consistently. The $F_1$ score reflects a mixture of these two behaviors. Perhaps most interesting is the influence of $\lambda$ on the RMSE. As $\lambda$ increases the performance of DART degrades while SBART remains roughly constant. Intuitively this is because, as $\lambda$ increases, DART must use an increasing number of branches to capture the additional signal in the data, while SBART is capable of representing the effects corresponding to $(x_4, x_5)$ with fewer parameters.

\begin{figure}
  \centering
  \includegraphics[width = .9\textwidth]{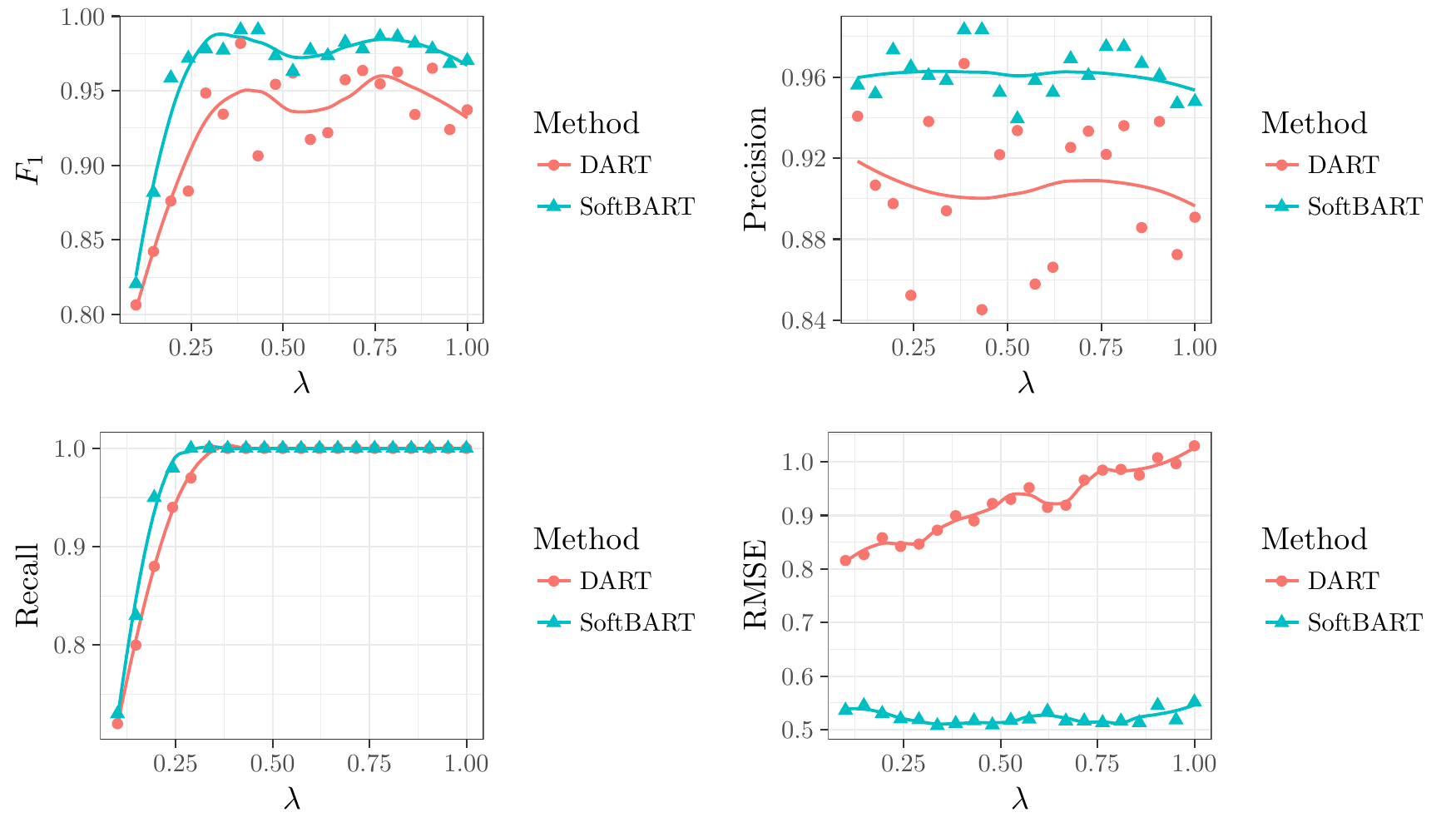}
  \caption{Results for variable selection, with a loess smooth to aide
    visualization.}
  \label{fig:varselect}
\end{figure}

\subsection{Approximation of non-smooth and locally smooth functions}
\label{sec:local}

A potential concern with the use of soft decision trees is that they may not be able to capture fine-scale variability in the underlying regression function. An extreme example of this is when $f$ is a step function. We consider the regression function
\begin{math}
  f(x) = 2 -4 I(x_1 < 0.5).
\end{math}
In this case, one might expect soft decision trees to perform suboptimally relative to hard decision trees because a soft decision tree must model the jump at $0$ in a continuous fashion.

Surprisingly, ensembles of soft decision trees can outperform ensembles of hard decision trees even in this case. Figure~\ref{fig:step-fun} shows fits of BART and SBART to $n = 250$ data points and a high signal of $\sigma = 0.1$. We see that both methods can capture the large jump discontinuity at $x_1 = 0.5$. SBART performs better away from the discontinuity, however, because the level of smoothness is allowed to vary at different points in the covariate space. The trees responsible for the jump discontinuity have small $\tau_t$'s to effectively replicate a step function, while elsewhere the trees have large $\tau_t$'s to allow the function to essentially be constant.

\begin{figure}
  \centering
  \includegraphics[width=.7\textwidth]{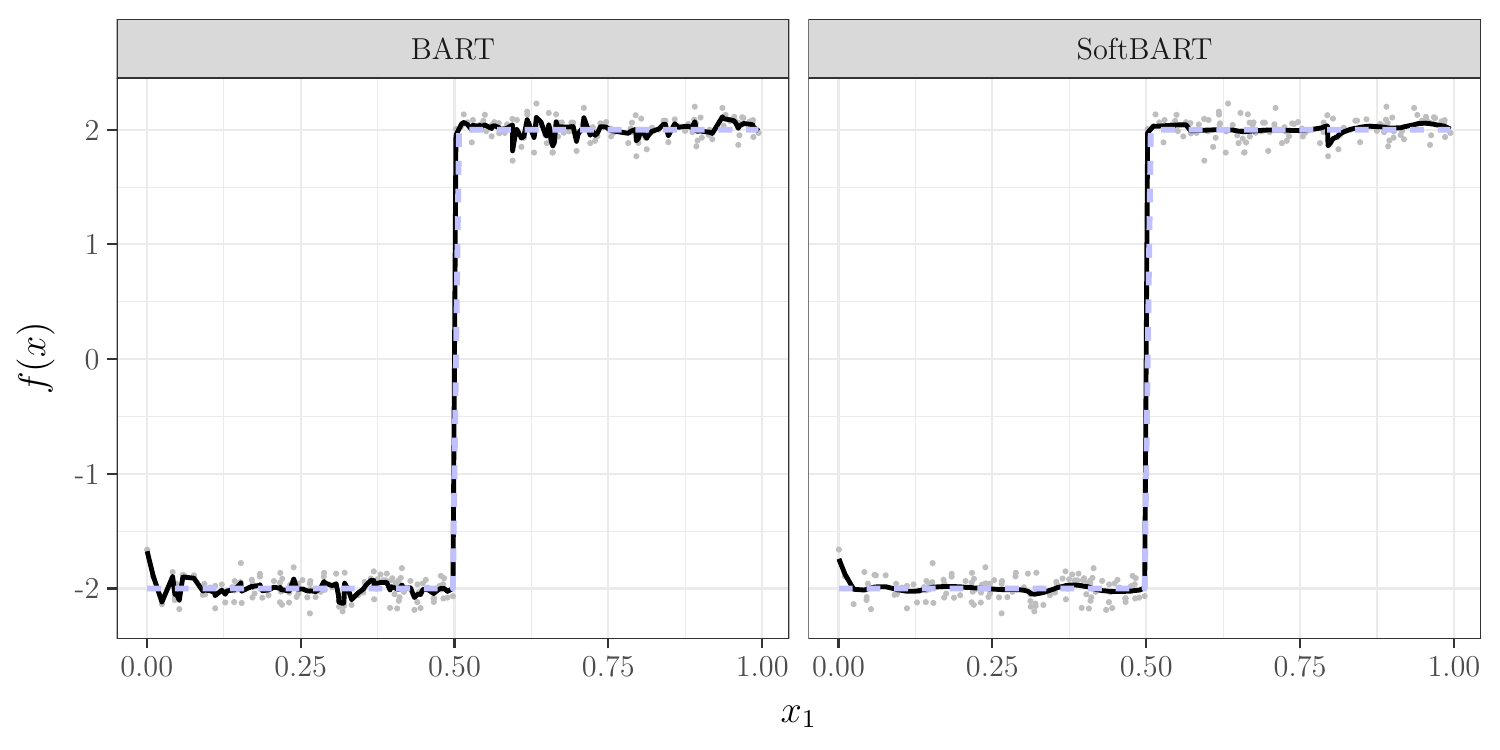}
  \caption{Estimate of $f(x) = 2 - 4I(x_1 < 0.5)$ using the posterior mean under
    the BART and SBART priors; dashed/blue line is the true mean, solid/dark
    line is the fit. Points are the observed data.}
  \label{fig:step-fun}
\end{figure}

The ability to select different $\tau_t$'s allows SBART to obtain a locally-adaptive behavior. % similar to wavelets
To illustrate this, Figure~\ref{fig:wavelet} gives the fit of BART and SBART when $f(x)$ is a highly localised Daubechies wavelet of smoothness order $10$. We see that SBART is capable of adapting both to the constant regions outside of the support of the wavelet, and the fast oscillatory behavior within the support of the wavelet. The fit of BART, by contrast, possesses many artifacts outside the support of the wavelet, and possesses generally wider credible bands. % In this case, despite having substantially wider bands, the coverage for the BART is lower than SBART (95\% and 98\% respectively). 

\begin{figure}
  \centering
  \includegraphics[width=.9\textwidth]{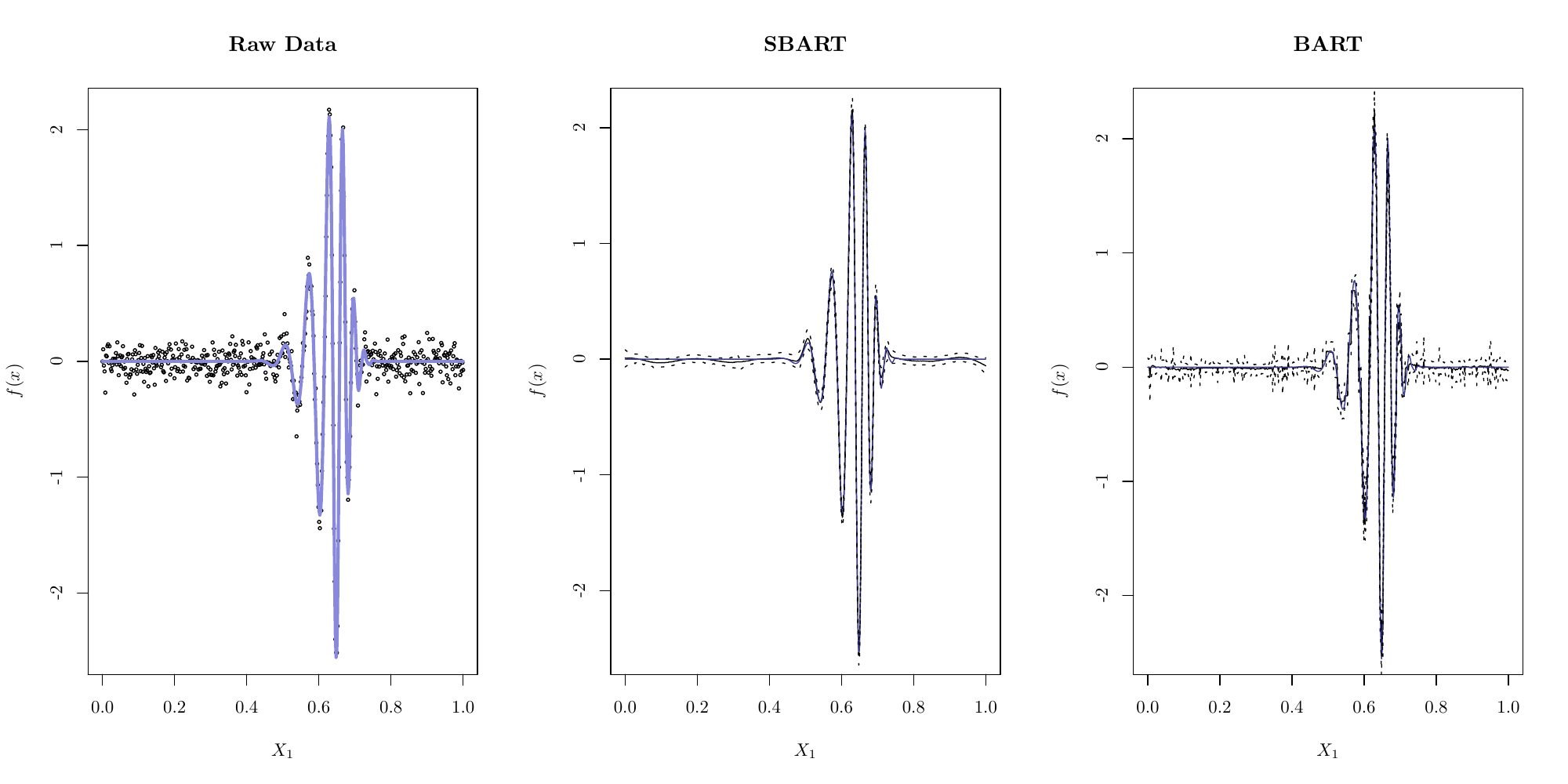}
  \caption{Left: Raw data, consisting of observations drawn with the Daubechies wavelet as the mean function, with the mean function displayed in light blue. Middle: Fit of SBART to the raw data, with pointwise 95\% posterior credible bands. Right: Fit of BART to the raw data, with pointwise 95\% posterior credible bands.}
  \label{fig:wavelet}
\end{figure}

\subsection{Benchmark datasets}
\label{sec:benchmark}

We compare the SBART to various tree-based and non-tree-based methods on several benchmark datasets. We consider BART, DART, the LASSO (\texttt{glmnet}), random forests (\texttt{randomForest}), and gradient boosted decision trees (\texttt{xgboost}). The parameters for the non-Bayesian procedures were chosen, separately for each fit, using the \texttt{caret} package. Default priors (with $T = 50$) for SBART and DART were used; additionally, we consider selecting the hyperparameters of SBART and BART by cross validation.

% We use our own implementation of BART in order to control for implementation differences across packages; we remark that the overall message is the same regardless of what implementation is used. 

% We compare the SBART method to various tree-based and non-tree-based methods on several benchmark datasets. As competitors we consider BART, the LASSO (\texttt{glmnet}), random forests (\texttt{randomForest}), and gradient boosted decision trees (\texttt{xgboost}). We use our own implementation of BART for comparison to control for the use of a sparsity-inducing prior and to ensure that differences between BART and SBART are not due to details such as the use of priors on hyperparameters or MCMC implementation; we remark that on a majority of datasets our implementation performed better than the implementation of the \texttt{bartMachine} package, and that the overall message is the same regardless of what implementation is used.

Ten datasets are considered. Aside from \texttt{bbb} and \texttt{wipp}, the datasets are a subset of those considered by \citet{kim2007visualizable}. While we consider only a subset of these datasets, no datasets considered for this experiment were omitted. Attributes of these datasets are presented in Table~\ref{tab:datasets}. The response in each dataset was transformed to be approximately Gaussian. The \texttt{bbb}, \texttt{triazines}, and \texttt{wipp} datasets were also considered by \citet{linero2016bayesian} to illustrate features of the sparsity-inducing priors for decision tree methods.

\begin{table}
\begin{tabular}{lrrrrrrr}
\toprule 
Data & BART-CV & DART & SBART & RF & XGB & LASSO & SBART-CV\\
\midrule
\texttt{ais} & \textbf{1.00} (1) & \textbf{1.00} (1) & \textbf{1.00} (1) & 1.01 (5) & 1.03 (6) & 1.04 (7) & \textbf{1.00} (1)\\
\texttt{abalone} & 1.03 (4) & 1.03 (4) & \textbf{1.00} (1) & 1.02 (3) & 1.03 (4) & 1.12 (7) & \textbf{1.00} (1)\\
\texttt{bbb} & 1.07 (6) & 1.04 (4) & \textbf{0.99} (1) & 1.01 (3) & 1.05 (5) & 1.10 (7) & 1.00 (2)\\
\texttt{cpu} & 0.98 (2) & 1.01 (5) & 1.01 (4) & \textbf{0.97} (1) & 1.02 (6) & 1.31 (7) & 1.00 (3)\\
\texttt{diamonds} & 1.15 (4) & 1.07 (3) & 1.01 (2) & 2.29 (6) & 1.43 (5) & 3.53 (7) & \textbf{1.00} (1)\\
\texttt{hatco} & 1.14 (3) & 1.15 (4) & 1.10 (2) & 1.39 (6) & 1.20 (5) & 1.44 (7) & \textbf{1.00} (1)\\
\texttt{servo} & 1.02 (3) & 1.02 (3) & \textbf{0.99} (1) & 1.17 (6) & 1.06 (5) & 1.75 (7) & 1.00 (2)\\
\texttt{tecator} & 1.87 (4) & 1.63 (4) & \textbf{0.98} (1) & 1.95 (7) & 1.56 (3) & 1.85 (5) & 1.00 (2)\\
\texttt{triazines} & 0.98 (3) & 0.99 (4) & 0.99 (4) & \textbf{0.92} (1) & 0.94 (2) & 1.13 (7) & 1.00 (6)\\
\texttt{wipp} & 1.19 (4) & 1.14 (3) & 1.03 (2) & 1.43 (7) & 1.28 (5) & 1.41 (6) & \textbf{1.00} (1) \\
\midrule
Average RMPE & 1.14 (4) & 1.11 (3) & 1.01 (2) & 1.32 (6) & 1.16 (5) & 1.57 (7) & \textbf{1.00} (1)\\
Average Rank & 3.4 (3) & 3.5 (4) & \textbf{1.9} (1) & 4.5 (5) & 4.6 (6) & 6.7 (7) & 2 (2)\\
\bottomrule
\end{tabular}
\caption{Results of the experiment described in Section~\ref{sec:benchmark}. The columns associated with the methods give their root mean predictive error, normalized by the root mean predictive error of SBART-CV. In parentheses, we give the rank of the method among the 5 different approaches. The best-ranked method for each dataset is given in bold.}
  \label{tab:datasets}
\end{table}

Results of the experiment are given in Table~\ref{tab:datasets}. Methods are compared by an estimate of their root mean predictive error obtained using $5$-fold cross-validation, with the results averaged over $20$ replications of the cross-validation. For each experiment, the root mean predictive error for each method is normalized by the root mean predictive error for SBART-CV, so that scores higher than 1.00 correspond to worse performance than SBART and scores lower than 1.00 correspond to better performance than SBART.

SBART/SBART-CV is seen to perform very well in practice, attaining the best performance on 8 out of the 10 datasets. The results here are consistent with the general observation of \citet{chipman2010bart} that BART outperforms gradient boosting and random forests in aggregate over many datasets. Two datasets stand out as particularly interesting. First, for the \texttt{tecator} dataset, SBART outperforms all other methods by a very wide margin, indicating that leveraging smoothness for this dataset is essential to attaining good performance. Second, the only dataset for which SBART-CV substantially outperforms SBART is the \texttt{hatco} dataset, where tuning the number of trees is required to attain optimal performance. This indicates that, for most datasets, the default SBART procedure works very well, but that if one wants to be absolutely sure of optimal performance they should tune $T$. 

%The SBART method gives additional performance benefits with improvements as large as 80\% over BART, and an average increase of 14\%.

\section{Discussion}
\label{sec:discussion}

We have introduced a novel Bayesian sum-of-trees framework and demonstrated that
it is capable of attaining a meaningful improvement over existing methods both
in simulated experiments and in practice. This was accomplished by incorporating
soft decision trees and sparsity-inducing priors. We also provided theoretical
support in the form of near-optimal results for posterior concentration,
adaptively over smoothness classes, when $f_0(x)$ is a sparse, or additive,
function. 

While this paper has focused only on the case of nonparametric regression, the
proposed methodology extends in a straight-forward manner to other settings. For
example, the case of binary classification can be addressed in the usual way via
a probit link and data augmentation.

Our theoretical results concern the rate of convergence of the posterior.
Another relevant question is whether the model can consistently estimate the
model support. That is, one can ask under what conditions $\Pi(S = S_0 \mid
\Data) \to 1$ as $n \to \infty$, where $S = \{p : \text{predictor $p$ appears in
  the ensemble}\}$ and $S_0 = \{p: \text{$f_0$ depends on $p$}\}$. This is an
interesting area for future research.

Software which implements SBART is available online at \url{https://github.com/theodds/SoftBART}, and is undergoing active development. Our code is based on the implementation of BART in the \texttt{BayesTree} package. Given enough optimization, we hope that our implementation could reach speeds within a modest factor of existing highly-optimized implementations of BART \citep{kapelner2014bartmachine}.

\appendix

\section{Proof of Lemma~\ref{lem:half-gaus}}

Let $K^{(d)}_{\tau}(x_1,x_2,\ldots,x_d)=\tau^{-d}\,\prod_{j=1}^d K(x_j/\tau)$ denote a $d$-dimensional tensor product of the rescaled one dimensional kernel $K$ in Assumption G, where recall that $\tau$ is the bandwidth parameter in the gating function. Let $\displaystyle C_K:\,=\int K(x)\,dx$ denote the normalization constant of $K$, so that we can write $K=C_K\widetilde K$, and the rescaled kernel function $\widetilde K$ is has an unit normalization constant. Also write $\widetilde K^{(d)}_\tau = K^{(d)}_{\tau}/C_K^d$. It is easy to verify that $\widetilde K$ also satisfies the two conditions in Assumption G, \hlb{though it may not be associated to any $\widetilde \psi$}.

Our proof is composed of three steps. First, we provide error bound estimates of
approximating any $\alpha$-smooth function by a convolution $K^{(d)}_{\tau} * g$ with some carefully constructed function $g$ for any $\tau > 0$. Second, we show that any continuous convolution $K^{(d)}_{\tau} * g$ can be approximated by a discrete sum $\sum_{t = 1}^T
\mu_t K^{(d)}_{\tau}(\cdot - x_t)$ with at most $O(\tau^{-d})$ atoms.
Lastly, we provide an error bound estimate on approximating this sum of kernels with a sum
of soft decision trees by identifying each kernel component $K^{(d)}_{\tau}(\cdot - x_t)$ as one particular leaf in the $t$th soft decision tree $g(x;\, \mathcal{T}_t,\mathcal{M}_t)$ whose depth is at most $2d$ via splitting at most $2d$ times, for $t=1,\ldots,T$.

{\bf Step 1:} This step is follows as a direct result of the following lemma, which is
adapted from Lemma 3.4 of \cite{de2010adaptive}. 

\begin{lemma}
  Under Assumption G, for any $f_0 \in \sC^{\alpha,R}([0,1]^d)$, there exist some constants $(M_1,M_2)$ independent of $\tau$, and a function $T_{b,\tau}
  f_0$ satisfying $\|T_{b,\tau}
  f_0\|_\infty \leq M_1R$, such
  that
  \begin{align*}
    \|\widetilde K^{(d)}_{\tau} * (T_{b, \tau}f_0) - f_0 \|_{\infty} \le M_2 R\, \tau^\alpha.
  \end{align*}
\end{lemma}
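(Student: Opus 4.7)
The plan is to adapt Lemma 3.4 of \cite{de2010adaptive} from its Gaussian-mixture setting to our general analytic-kernel setting. Write $\alpha = \beta + \gamma$ with $\beta$ the largest integer strictly less than $\alpha$ and $\gamma \in (0,1]$. I would first establish the one-dimensional case and then tensorize; throughout, $f_0$ is extended from $[0,1]^d$ to $\mathbb{R}^d$ by a Whitney extension that preserves the H\"older radius up to a dimensional constant.

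The heart of the one-dimensional construction is a truncated Fourier inversion. Assumption G(1) provides $\widehat{\widetilde K}(0) = 1$ and $\int |z|^m |\widetilde K(z)|\, dz < \infty$ for every $m$, so $\widehat{\widetilde K}$ is $C^\infty$ at the origin; Assumption G(2) upgrades this to real analyticity on a neighborhood of $0$. Since $\widehat{\widetilde K}(0) \ne 0$, the reciprocal $1/\widehat{\widetilde K}(\xi)$ admits a convergent Taylor expansion $\sum_{k\ge 0} a_k \xi^k$ on some disk about the origin, with radius independent of $\tau$. I would truncate this expansion at order $\beta$ and realize the resulting polynomial-in-$\xi$ multiplier as a finite linear combination of shifts,
\begin{equation*}
T_{b,\tau} f_0(x) = \sum_{j} w_j\, f_0(x + \tau h_j),
\end{equation*}
with bounded weights $w_j$ and bounded offsets $h_j$ (obtained, for example, via symmetric finite differences) chosen so that the Fourier symbol of $T_{b,\tau}$ equals $\sum_{k=0}^\beta a_k (\tau\xi)^k$ to leading order in $\tau\xi$. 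Because the $w_j$ depend only on $\{a_k\}_{k\le\beta}$ and the finite-difference coefficients, one obtains $\|T_{b,\tau} f_0\|_\infty \le M_1 \|f_0\|_\infty \le M_1 R$ with $M_1$ independent of $\tau$.

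To control the approximation error I would write
\begin{equation*}
(\widetilde K_\tau * T_{b,\tau} f_0)(x) - f_0(x) = \int \widetilde K(z)\, \bigl[T_{b,\tau} f_0(x - \tau z) - f_0(x)\bigr]\, dz,
\end{equation*}
Taylor-expand the bracket in powers of $\tau z$, and invoke the defining moment identity (the product $\widehat{\widetilde K}(\xi) \sum_{k\le\beta} a_k \xi^k$ equals $1$ to order $\beta+1$ at the origin) to cancel all polynomial contributions of total order $\le \beta$. The residual is then bounded by $R\, |\tau z|^\alpha |\widetilde K(z)|$ via the H\"older regularity of $f_0$; integrating against the moment bound from Assumption G(1) yields the target error $M_2 R \tau^\alpha$. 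The $d$-dimensional conclusion follows by setting $T_{b,\tau}^{(d)} = \bigotimes_{j=1}^d T_{b,\tau}^{(1)}$ and using a coordinate-by-coordinate telescoping argument for the tensor product kernel $\widetilde K^{(d)}_\tau$. The main obstacle is that a H\"older function need not be $\beta$-times classically differentiable, so the Fourier-inversion heuristic must be executed via bounded shift/finite-difference operators rather than derivatives, and one must verify that Assumption G(2) gives the Taylor series $\sum_k a_k \xi^k$ a radius of convergence independent of $\tau$, so that $T_{b,\tau}$ is well-defined and bounded uniformly for all small $\tau$.
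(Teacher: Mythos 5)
The paper does not actually prove this lemma; it is stated as an adaptation of Lemma 3.4 of \citet{de2010adaptive}, whose construction takes $T_{b,\tau}f_0$ to be a fixed linear combination of iterated convolutions $f_0,\ \widetilde K_\tau * f_0,\ \widetilde K_\tau*\widetilde K_\tau*f_0,\ldots$ --- in Fourier terms, a truncated Neumann series for the inverse symbol $1/\widehat{\widetilde K}(\tau\xi)$ expanded in powers of $1-\widehat{\widetilde K}(\tau\xi)$. Your proposal inverts the same symbol but realizes the truncated inverse as a finite combination of shifts $f_0(\cdot+\tau h_j)$, i.e.\ a quasi-interpolation/finite-difference operator whose weights solve a Vandermonde system. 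Both constructions give an operator bounded on $L^\infty$ uniformly in $\tau$ and both cancel the moments of $\widetilde K$ up to order $\beta$, so your error analysis (Taylor expansion to order $\beta$, H\"older remainder of order $\alpha$, integration against $\int(1+|z|)^\alpha|\widetilde K(z)|\,dz<\infty$ from Assumption G(1)) goes through, and the Whitney extension plus coordinate-wise tensorization are the standard finishing steps. The approach is sound and essentially equivalent in spirit, just realized with shifts instead of convolutions.

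Two points should be tightened. First, Assumption G(2) (analyticity of $K$ on a strip) does \emph{not} imply that $\widehat{\widetilde K}$ is real-analytic near the origin --- that would require exponential decay of $K$ --- so you cannot assert a convergent Taylor series for $1/\widehat{\widetilde K}$, and your closing worry about its radius of convergence cannot be resolved the way you suggest. Fortunately none of this is needed: only the degree-$\beta$ Taylor polynomial of $1/\widehat{\widetilde K}$ at $0$ enters the construction, and it exists because G(1) makes $\widehat{\widetilde K}$ a $C^\infty$ function with $\widehat{\widetilde K}(0)=1\neq 0$; since the symbol depends on $\xi$ only through $\tau\xi$, the coefficients $a_k$ and hence the weights $w_j$ are automatically $\tau$-free. (G(2) is really there for the discretization step in Lemma~\ref{lem:mix-approx}, not for this approximation lemma.) Second, your stated ``main obstacle'' is moot: under the paper's definition of $\sC^{\alpha,R}$, $f_0$ has classical partial derivatives up to order $\beta$ with the order-$\beta$ derivatives $(\alpha-\beta)$-H\"older, so the Taylor expansion with H\"older remainder is directly available; the shift-based realization remains a perfectly good, derivative-free way to organize the computation, but it is a convenience rather than a necessity.
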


From this lemma, we immediately have 
  \begin{align*}
    \| K^{(d)}_{\tau} * g - f_0 \|_{\infty} \le M_2 R\, \tau^\alpha,
  \end{align*}
where $g = C_K^{-d}\,T_{b, \tau} f_0$ satisfies $\|g\|_\infty \leq M_1'R$, with $M_1'=C_K^{-d}M_1$ independent of $\tau$.

{\bf Step 2:} This step generalizes the theory of approximating a continuous one-dimensional
density function from by a mixture of Gaussians developed in
\citet{ghosal2007posterior} to by a location mixture of any kernel $K$ satisfying Assumption G. We also extend their result from density estimation to general function estimation as demanded in our regression setting,
where the target function $f$ may not integrate to one and can take
negative values. First, we state an extension of Lemma 3.1 of
\citet{ghosal2001entropies} from dimension one to dimension $d$, and from the Gaussian kernel to any kernel $K$ satisfying Assumption G.

\begin{lemma}
  \label{lem:mix-approx}
  Under Assumption G, for any probability density function $p_0$ on $[0,1]^d$, any $\epsilon > 0$,
  and $\tau \in (0,1)$, there is a discrete measure $P_\tau = \sum_{t = 1}^T r_t\,
  \delta_{x_t}$ with $T \le C_1 \tau^{-d} \log^d(1/\tau)$ support points such
  that $\sum_{t = 1}^T r_t = 1$ and
  \begin{align*}
    \|K^{(d)}_{\tau} * p_0 - \sum_{t=1}^T r_t \, K^{(d)}_{\tau}(\cdot - x_t)\|_\infty
    \le
    D_1 \epsilon / \tau^d,
  \end{align*}
  where $(C_1, D_1)$ are independent of $\tau$ and $K$. 
\end{lemma}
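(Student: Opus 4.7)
The plan is to combine a polynomial approximation of $y \mapsto K^{(d)}_\tau(x-y)$, exploiting the strip analyticity in Assumption G(2), with a Tchakaloff-type moment-matching quadrature for $p_0$. First, for each fixed $x \in [0,1]^d$, I would produce a tensor-product polynomial $Q_{x,N}(y)$ of coordinate-wise degree at most $N$ such that
\begin{equation*}
\sup_{y \in [0,1]^d} \bigl| K^{(d)}_\tau(x-y) - Q_{x,N}(y) \bigr| \;\lesssim\; \tau^{-d-1}\, e^{-c N \tau}
\end{equation*}
for a constant $c>0$ depending only on $\rho$ and $\sup_{\mathcal{S}(\rho)}|K|$. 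The one-dimensional input is the Bernstein/Chebyshev estimate: since $K$ is bounded and analytic on $\mathcal{S}(\rho)$, the map $y_j \mapsto K((x_j - y_j)/\tau)$ is bounded and analytic on $\{|\Im y_j| < \tau\rho\}$, so its Chebyshev truncation of degree $N$ on $[0,1]$ has sup-norm error $\lesssim \tau^{-1} e^{-c N\tau}$; the factor $\tau^{-1}$ arises because the Bernstein ellipse containing $[0,1]$ has semi-minor axis only $\lesssim \tau\rho$, and standard Chebyshev estimates carry a $1/(R-1)$ prefactor. A telescoping inequality $|\prod_j K_j - \prod_j Q_j| \le \sum_j |K_1\cdots K_{j-1}(K_j - Q_j)Q_{j+1}\cdots Q_d|$ assembles the tensor product, and the $\tau^{-d}$ normalization of $K^{(d)}_\tau$ completes the displayed bound uniformly in $x$.

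Second, I would construct $P_\tau$ by moment matching. Applying Tchakaloff's theorem to the $(N+1)^d$-dimensional space of polynomials on $[0,1]^d$ of coordinate-wise degree at most $N$ yields a discrete probability measure $P_\tau = \sum_{t=1}^T r_t \delta_{x_t}$ with non-negative weights summing to one and at most $T \le (N+1)^d$ support points satisfying $\int y^k\,dp_0 = \int y^k\,dP_\tau$ for every multi-index $k$ with $|k|_\infty \le N$. In particular $\int Q_{x,N}\,dp_0 = \int Q_{x,N}\,dP_\tau$ for every $x$, and substituting gives
\begin{equation*}
(K^{(d)}_\tau * p_0)(x) - \sum_{t=1}^T r_t K^{(d)}_\tau(x - x_t) = \int \bigl[K^{(d)}_\tau(x-y) - Q_{x,N}(y)\bigr]\, d(p_0 - P_\tau)(y),
\end{equation*}
whose absolute value is at most twice the uniform polynomial-approximation error, i.e.\ $\lesssim \tau^{-d-1} e^{-cN\tau}$. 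Choosing $N$ of order $\tau^{-1}\log(1/\epsilon)$ drives the error below $D_1\epsilon/\tau^d$ (absorbing the $\tau^{-1}$ prefactor into the exponential by enlarging the constant), while giving $T \le (N+1)^d \le C_1 \tau^{-d}\log^d(1/\epsilon)$ support points, as claimed (and equivalently $\lesssim \tau^{-d}\log^d(1/\tau)$ in the regime $\epsilon \gtrsim \tau^{O(1)}$ required by the downstream application).

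The main technical obstacle I expect is making the polynomial approximation step uniformly quantitative in both $x$ and $\tau$: the natural domain of analyticity of $y_j \mapsto K((x_j - y_j)/\tau)$ shrinks at rate $\tau$ as $\tau \downarrow 0$, which is what forces $N$ to scale like $\tau^{-1}$ and ultimately produces the $\tau^{-d}$ factor in the support size. One must also verify that the Bernstein/Chebyshev constants depend only on the width $\rho$ and on $\sup_{\mathcal{S}(\rho)}|K|$ granted by Assumption G, and in particular are uniform in $x \in [0,1]^d$. Tchakaloff's theorem takes care of positivity and normalization of the weights for free, so no additional rounding, truncation, or sign-correction step is needed.
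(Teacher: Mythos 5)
Your proof is correct, and it reaches the lemma by a genuinely different and more self-contained route than the paper. The paper's proof is essentially a pointer to Lemma 3.1 of Ghosal and van der Vaart (2001) and Theorem 7 of Shen et al.\ (2013), whose mechanism is \emph{local}: partition $[0,1]^d$ into $O(\tau^{-d})$ cells of side $\tau$, Taylor-expand the kernel to order $O(\log(1/\epsilon))$ around each cell center, and match moments of $p_0$ cell by cell; the only Gaussian-specific ingredient is the bound $|\phi^{(k)}(0)/k!|\le C_1 e^{-C_2 k}$, and the paper's sole new content is recovering this exponential decay for a general $K$ via Cauchy's integral formula on the strip $\mathcal{S}(\rho)$ --- exactly the same use of Assumption G(2) that drives your Bernstein-ellipse estimate. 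You instead work \emph{globally}: one Chebyshev truncation of degree $N\asymp \tau^{-1}\log(1/(\epsilon\tau))$ per coordinate (the shrinking analyticity width $\tau\rho$ forces $N\propto\tau^{-1}$, playing the role of the $\tau^{-d}$ cell count), followed by a single Tchakaloff quadrature on the $(N+1)^d$-dimensional space of coordinate-wise degree-$N$ polynomials. The bookkeeping matches on both routes: $T\lesssim \tau^{-d}\log^d(1/\epsilon)$ atoms, nonnegative weights summing to one (free from Tchakaloff in your case, from matching zeroth moments per cell in theirs), and error $\lesssim \epsilon/\tau^d$. Your remark that the stated bound $\log^d(1/\tau)$ and the bound $\log^d(1/\epsilon)$ actually invoked in Step 2 of the proof of Lemma~\ref{lem:half-gaus} agree only when $\epsilon\ge\tau^{O(1)}$ (the regime used, $\epsilon=\tau^{d+\alpha}$) correctly flags a genuine imprecision in the statement. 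Two points to pin down, both routine: cite the Bayer--Teichmann/Putinar form of Tchakaloff's theorem, since you need a tensor-degree (not total-degree) polynomial space and nodes guaranteed to lie in $[0,1]^d$; and fit the Bernstein ellipse inside $\{|\Im y_j|\le \tau\rho\}$ with enough room that the uniform bound $M=\sup_{\mathcal{S}(\rho)}|K|$ from Assumption G applies, uniformly over $x\in[0,1]^d$.
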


\begin{proof}
We only sketch the key difference in the proof from Lemma 3.1 of
\citet{ghosal2001entropies} in the one-dimensional case, and a proof for extending the result from one-dimensional case to the multi-dimensional case follows similar lines as in the proof of Theorem 7 in \cite{shen2013adaptive} (by replacing the Gaussian kernel with the kernel $K$).

The only key property of the Gaussian kernel used in the proof of Lemma 3.1 of \citet{ghosal2001entropies} is in bounding the remainder term in the $k$-th order Taylor expansion in their equation (3.11),
where they used the fact that for any $k\geq 1$, the $k$th order derivative of the standard Gaussian density function $\phi(x)=(2\pi)^{-1/2}\,e^{-x^2/2}$ at the origin $x=0$ satisfies the bound
$$
\Big|\frac{\phi^{(k)}(0)}{k!}\Big| \leq C_1\,\exp\{-C_2\, k\}, 
$$
for some sufficiently large constant $C_2>0$ (since we only focus on the approximation error over the unit interval $[0,1]$, we do not need to include the additional $\log(1/\varepsilon)$ term in equation (3.11) therein). Therefore, it suffices to verify a similar exponentially decay bound for the $k$th order derivative of function $K_\kappa:\,=\tau^{-1}\,K(\cdot/\kappa)$ for some sufficiently large number $\kappa>0$ depending on $C$. In fact, under Assumption G, $K(\cdot)$ can be analytically extend to the strip $\mathcal S(\rho)$ in the complex plane (for simplicity, we use the same notation $K$ to denote this extension), which implies by applying Cauchy's integral formula that 
\begin{align*}
\frac{K_\kappa^{(k)}(0)}{k!} = \frac{1}{2\pi \sqrt{-1}}\oint_{\Gamma_\kappa} \frac{K_\kappa(z)}{z^{k+1}}\,dz,
\end{align*}
where the closed path $\Gamma_\kappa$ is chosen as a counter-clockwise circle centering at the origin with radius $\kappa\,\rho$. Since $K$ is uniformly bounded on the path $\Gamma_\kappa$ by Assumption G, we can further deduce that
\begin{align*}
\Big|\frac{K_\kappa^{(k)}(0)}{k!}\Big| \leq \frac{D}{\kappa^{k+2}\,\rho^{k+1}} \leq  D\,\exp\{-C_2\, k\}
\end{align*}
holds as long as $\kappa \geq\rho^{-1}\,\exp\{C_2\}$, where $D$ is some constant only depending on $K$, which completes the proof.
\end{proof}

With this lemma on the density function approximation as our preparation, we now return to the problem of approximating any general bounded function $g$ over $[0,1]^d$. Notice that we always have the decomposition $g=g_{+}-g_{-}$ where $g_{+}=\max\{0, g(x)\}$ and $g_{-}(x) =\max\{0, -g(x)\}$ are the positive parts and negative parts of $g$, respectively, and both of them are nonnegative and bounded over $[0,1]^d$. Let $A_+=\int_{[0,1]^d} g_+(x)\,dx \leq \|g\|_{\infty}$ and$A_-=\int_{[0,1]^d} g_-(x)\,dx\leq \|g\|_{\infty}$. 
It is obvious that $g_{+}/A_{+}$ and $g_{-}/A_{-}$ are two legitimate pdfs over $[0,1]^d$. By applying Lemma~\ref{lem:mix-approx}, we can find two discrete measures $P_{+}=\sum_{t=1}^{T_+} r^{+}_t\, \delta_{x^{+}_t}$ and $P_{-}=\sum_{t=1}^{T_-} r^{-}_t\, \delta_{x^{-}_t}$ such that
\begin{align*}
&\big|A_{+}^{-1} K^{(d)}_{\tau}*g_+(x) -  \sum_{t=1}^{T_+} r^+_t \,K^{(d)}_{\tau}(x - x^+_t)\big|\leq D\,\varepsilon/\tau^d,\\
&\big|A_{-}^{-1} K^{(d)}_{\tau}*g_-(x) -  \sum_{t=1}^{T_-} r^-_t\, K^{(d)}_{\tau}(x - x^-_t)\big|\leq D\,\varepsilon/\tau^d,
\end{align*}
for any $x\in[0,1]^d$ and $\max\{T_+,\,T_-\} \leq  C\, \tau^{-d} \log^d(1/\varepsilon)$.
Now we combine these two discrete measures into a new discrete signed measure $P_0=\sum_{t=1}^{T_+} A_{+}\, r^+_t \,K^{(d)}_{\tau}(x - x^+_t) + \sum_{t=1}^{T_-} (-A_{-}\, r^-_t )\,K^{(d)}_{\tau}(x - x^-_t)$, which will be denoted as $\sum_{t=1}^T \mu_t\, K^{(d)}_{\tau}\big(\cdot-x_t)$. Then $T\leq T_{-} + T_+ \leq 2C\, \tau^{-d} \log^d(1/\varepsilon)$ and 
\begin{align*}
\big|K^{(d)}_{\tau}*g(x) -  \sum_{t=1}^{T} \mu_t \,K^{(d)}_{\tau}(x - x_t)\big|\leq (A_++A_-)\, D\,\varepsilon/\tau^d\leq 2D\,\|g\|_\infty \, \varepsilon/\tau^d,
\end{align*}
for all $x\in[0,1]^d$. Moreover, we have $\sum_{t=1}^T|\mu_t| \leq A_+\sum_{t=1}^{T_+} r_t^+ + A_-\sum_{t=1}^{T_-} r_t^- \leq 2\|g\|_\infty$.

{\bf Step 3:} In the last step, for each component $\mu_t\, K^{(d)}_{\tau}\big(\cdot-x_t)$ in the sum, we construct a soft decision tree $\widetilde{\mathcal{T}}_t$ and its associated leaf values $\widetilde{\mathcal{M}}_t$ in a way such that: 1. the tree splits exactly $2d$ times; 2. the weight function $\phi(x;\,\mathcal{T}_t,\, \ell_t)$ specified in~\eqref{eq:proper} associated with one particular leaf $\ell_t$ equals to $\tau^d\, K^{(d)}_{\tau}\big(\cdot-x_t)$, so that the existence of the sum of soft decision tree follows by setting the values $\widetilde {\mu}_{tl}$ associated with other leaves $\ell\neq \ell_t$ in this tree to be zero, and the value of this leave as $\widetilde {\mu}_{tl_t}=\tau^{-d}\, \mu_t$.
In fact, for any $y=(y_1,\ldots,y_d)\in[0,1]^d$, we have the decomposition $K^{(d)}_{\tau}(y) = \prod_{j=1}^d \tau^{-d}\, \psi(y_j/\tau)\,\big(1-\psi(y_j/\tau)\big)$. Consequently, we can construct the tree $\mathcal{T}_t$ by sequentially splitting twice along each coordinate $x_{t,j}$ ($j=1,2,\ldots,d$) of the center $x_{t}=(x_{t,1},\ldots,x_{t,d})$ in $\mu_t\, K^{(d)}_{\tau}\big(\cdot-x_t)$, so that the particular leaf as the end point of the path that goes once left and once right, respectively, at the two branches associated with $x_{t,j}$, for $j=1,\ldots,d$, receives weight $\phi(\cdot;\,\mathcal{T}_t,\, \ell_t)=\prod_{j=1}^d \psi\big((\cdot-x_{t,j})/\tau\big)\,\big\{1-\psi\big((\cdot-x_{t,j})/\tau\big)\big\}= \tau^d\, K^{(d)}_{\tau}(x_t)$, implying that for any $x$, $g(x;\,\widetilde{\mathcal{T}}_t,\widetilde {\mathcal{M}}_t) = \widetilde {\mu}_{tl_t}\, \phi(x;\,\widetilde{\mathcal{T}}_t,\, \ell_t) = \mu_t\, K^{(d)}_{\tau}\big(x-x_t)$. Since this construction is valid for any $t=1,\ldots,T$, we have $\sum_{t=1}^T \mu_t\, K^{(d)}_{\tau}\big(x-x_t) = \sum_{t=1}^Tg(x;\,\widetilde{\mathcal{T}}_t,\widetilde {\mathcal{M}}_t)$.

\vspace{0.5em}

Finally, a combination of steps 1-3 together yields a proof of the lemma.

\section{Proof of Theorem~\ref{thm:prior_thick}}
For convenience, we use the same notation $C$ to denote some constant independent of $(n,p)$, whose value may change from line to line. Without loss of generality, we may assume that $f_0$ depends only on its first $d$ coordinates. Applying Lemma~\ref{lem:half-gaus}, we obtain that for some parameters $\tau$ and $\varepsilon$ to be determined later, there exists some $\widetilde f=\sum_{t=1}^{\widetilde{T}}g(x;\,\widetilde{\mathcal{T}}_t,\widetilde {\mathcal{M}}_t)$ such that $\widetilde{T} \leq C\, \tau^{-d} \log^d(\varepsilon^{-1})$, $\|\widetilde f -f_0\|_\infty \leq C\,(\tau^{\alpha} + \varepsilon\,\tau^{-d})$, and the total number of splits (all are along the first $d$ coordinates) across all trees are at most $2d\,\widetilde{T}$ ($2d\,\widetilde{T}$ many leaves in total).

Recall that our prior over the sum of soft decision tree function $f$ is specified in a hierarchical manner: first, we specify the number $T$ of trees and the tree topology $\mathcal{T}=\{\mathcal{T}_1,\ldots,\mathcal{T}_T\}$; second, conditional on these we decide the coordinates in all splits across all the decision trees; third, we sample the independent splitting locations along all the selected coordinates; last, we sample bandwidth parameters $\tau_t$ associated with each tree and parameters $\mu$'s associated with all leaves across the trees. We denote by $\widetilde T$ and $\widetilde{\mathcal{T}}$ the corresponding number of trees and the tree topology of $\widetilde{f}$.

We denote all the splitting coordinates of $f$ given $T$ and the tree topology $\mathcal{T}$ by $S\in\{1,\ldots,p\}^{N}$, where $N=\sum_{t=1}^T (L_t-1) \leq 2d \,T$ and recall that $L_t$ denotes the number of leaves in the $t$th tree, and denote by $\widetilde{S}$ the corresponding vector associated with $\widetilde f$. We also denote the set of all splitting locations (along the selected splitting coordinates) and bandwidths as $x=(x_1,x_2,\ldots,x_N)\in\mathbb{R}^N$ and $\tau_S=(\tau_1,\tau_2,\ldots,\tau_T)\in\mathbb{R}_{+}^T$ respectively, and the set of all leaf values as $\mu=(\mu_1,\ldots,\mu_{N+T})\in\mathbb{R}^{N+T}$. We also define $\widetilde{x}^N$ and $\widetilde{\mu}$ in a similar way.
By construction, it is easy to check that if $f$ shares the same $T$, tree topology $\mathcal{T}$ and splitting coordinates $S$ as $\widetilde{f}$, then if $\{x,\tau_S,\mu\}$ are sufficiently close to $\{\widetilde{x},\tau,\widetilde{\mu}\}$ in the sense that for any $\delta>0$,
\begin{align*}
&\max_{u=1,\ldots,N}\big|x_u-\widetilde{x}_u\big| \leq C\,\tau^{2d}\, \delta,\quad  \max_{u=1,\ldots,T}\big|\tau_u-\tau\big| \leq C\, \tau^{d+1}\,\delta,\\
&\quad  \mbox{and}\quad   \max_{u=1,\ldots, N+T}\big|\mu_u-\widetilde{\mu}_{u}\big| \leq C\, T^{-1}\,\tau^{d} \,\delta,
\end{align*}
then we have the following perturbation error bound by applying the triangle inequality,
\begin{align}\label{Eqn:pur_bound}
\bigg|\sum_{t=1}^{T}g(x;\,\mathcal{T}_t,\mathcal{M}_t) - \sum_{t=1}^{\widetilde{T}}g(x;\,\widetilde{\mathcal{T}}_t,\widetilde {\mathcal{M}}_t)\bigg| \leq C\, \delta,\quad\mbox{for all $x\in[0,1]^p$.}
\end{align}

% \hlo{Two things: need to include formally that $\alpha = \Theta(P^{-\epsilon})$? and
%   check depth condition? I think $\Pi(\Tree = \widetilde \Tree \mid T =
%   \widetilde T)$ is only including probablility of having same splits, not same topology?} 
Now we apply Theorem 2.1 in \cite{yang2014minimax} on the prior concentration probability for high-dimensional Dirichlet distribution and Assumption P3 to obtain that the splitting proportion vector $s=(s_1,\ldots,s_p)$ satisfies
\begin{align}\label{Eqn:sprior}
\Pi\Big[s_j \geq (2d)^{-1} \mbox{ for } j=1,\ldots,d, \mbox{ and }\sum_{j=d+1}^p s_j \leq d^{-1}\Big] \geq \exp\{-C\, d\log p\}.
\end{align}
This combined with the fact that each tree has depth at most $2d$ and Assumption P5 implies that the prior probability of $\mathcal{T}=\widetilde{\mathcal{T}}$ given $T=\widetilde{T}$ can be lower bounded by
\begin{align*}
\Pi\Big[\mathcal{T}=\widetilde{\mathcal{T}}\,|\, T=\widetilde{T}\big] \geq C\, d^{-N} \geq \exp\big\{-C\, \tau^{-d} \,\log^d(\varepsilon^{-1})\big\},
\end{align*}
where we have used the fact that $N \leq C\, \tau^{-d} \,\log^d(\varepsilon^{-1})$ in the last step.
The perturbation error bound in~\eqref{Eqn:pur_bound} implies
\begin{align*}
&\Pi\big[ \|f-\widetilde{f}\|_\infty \leq C\,\delta\,\big| \, \mathcal{T}=\widetilde{\mathcal{T}},\, T=\widetilde{T}\big] \\
&\geq \Pi\Big[\max_{u=1,\ldots,N}\big|x_u-\widetilde{x}_u\big| \leq C\,\tau^{2d},\ \max_{u=1,\ldots,T}\big|\tau_u-\tau\big| \leq C\, \tau^{d+1}\,\delta,\\
&\qquad\qquad\qquad\qquad\qquad\qquad    \max_{u=1,\ldots, N+T}\big|\mu_u-\widetilde{\mu}_{u}\big| \leq C\, T^{-1}\,\tau^{d} \,\delta\,\Big|\, \mathcal{T}=\widetilde{\mathcal{T}},\, T=\widetilde{T}\Big] \\
&\geq
\exp\big\{-C\, \tau^{-d} \log^d(\varepsilon^{-1}) \log \big[(\tau\,\delta)^{-1}\big]\big\},
\end{align*}
where in the last step we applied Assumptions P2 and P4, and used the fact that $\sum_{u}|\widetilde{\mu}_u|\leq C\, \tau^{-d}$ for some constant $C$ only dependent of $f_0$ (due to Lemma~\ref{lem:half-gaus}).
Putting all pieces together and using Assumption P1 and the properties of $\widetilde{f}$, we obtain
\begin{align*}
&\Pi\big[ \|f-f_0\|_\infty \leq C\,(\delta + \tau^\alpha+\varepsilon/\tau^d)\big] \\
&\geq \Pi[T=\widetilde{T}] \cdot \Pi\Big[\mathcal{T}=\widetilde{\mathcal{T}}\,|\, T=\widetilde{T}\big] \cdot \Pi\big[ \|f-\widetilde{f}\|_\infty \leq C\,\delta\,\big| \, \mathcal{T}=\widetilde{\mathcal{T}},\, T=\widetilde{T}\big]\\
&\geq \exp\big\{-C\,\tau^{-d} \log^d(\varepsilon^{-1}) -C\, d\log p-  C\, \tau^{-d} \log^d(\varepsilon^{-1}) \log \big[(\tau\,\delta)^{-1}\big]\big\}.
\end{align*}
Therefore, by choosing $\tau = (\log^{d+1} n/n)^{-1/(2\alpha + d)}$, $\delta = \tau^\alpha$, $\varepsilon=\tau^{d+\alpha}$, we can obtain the claimed prior concentration probability lower bound as $\Pi\big[ \|f-f_0\|_\infty \leq C\,\varepsilon_n\big] \geq \exp\{-C\,n\,\varepsilon_n^2\}$.

%%%%%%%%%%%%%%%%%%%%%%%%%%%%%%%%%%%%%%%%%%%%%%%%%%%%%%%%%%%%%%%%%%%%%%%%%%%%%%%%%%%%%%%%%%

\section{Proof of Theorem~\ref{Thm:main2}}
Using Theorem 3.2 in \cite{bhattacharya2016bayesian} (see also Section 4.1 therein), it suffices to show that $\Pi\big[ \|f-f_0\|_\infty \leq C\,\varepsilon_n\big] \geq \exp\{-C\,n\,\varepsilon_n^2\}$.
The proof of this is almost the same as that of Theorem~\ref{thm:prior_thick}, the only difference is that now we apply Lemma~\ref{lem:half-gaus} to find $V$ functions $\{\widetilde{f}_v:\,v=1,\ldots,V\}$, where  $\widetilde{f}_v$ contains $\widetilde{T}_v$ trees and approximates the $v$th additive component $f_{0,v}$ in $f_0$ for $v=1,\ldots, V$, and set $\widetilde{f}=\sum_{v=1}^V \widetilde{f}_v$. Due to the additive structure in our sum of soft decision tree model, we can always write $f=\sum_{v=1}^V f_v$ where $f_v$ collects $\widetilde{T}_v$ trees and has the same sum of soft decision tree prior structure when conditioning on the total number of trees $T=\sum_{v=1}^V\widetilde{T}_v$, and the conditional priors of $(f_1,\ldots,f_v)$ given $T=\sum_{v=1}^V\widetilde{T}_v$ and the splitting proportion vector $s$ are independent. Let $\mathcal S= \big\{s_j \geq (2d)^{-1} \mbox{ for } j=1,\ldots,d, \mbox{ and }\sum_{j=d+1}^p s_j \leq d^{-1}\big\}$ denote the event in inequality~\eqref{Eqn:sprior} with $d:\,=\sum_{v=1}^V d_v$.
Therefore, we obtain by applying Assumption P, the prior concentration bound~\eqref{Eqn:sprior} for $s$, and Theorem~\ref{thm:prior_thick} for a single $f_v$ (choose parameters $\tau_v,\delta_v,\varepsilon_v$ for each $f_v$ as in the proof of Theorem~\ref{thm:prior_thick}) that
\begin{align*}
&\Pi\Big[ \|f-f_0\|_\infty \leq C \sum_{v=1}^V \varepsilon_{n,v} \Big]\\
&\geq \Pi\Big[T = \sum_{v=1}^V\widetilde{T}_v \Big]\cdot \Pi\Big[s\in\mathcal{S}\Big] \cdot  \sup_{s\in\mathcal{S}}\bigg\{\prod_{v=1}^V\Pi\Big[ \|f_v-f_{0,v}\|_\infty \leq C \varepsilon_{n,v}\,\big|\, T=\sum_{v=1}^V\widetilde{T}_v,\,s\Big]\bigg\}\\
& \geq \exp\Big\{-C n \, \sum_{v=1}^V \varepsilon_{n,v}^2 - C\sum_{v=1}^V d_v \log p\Big\} \geq \exp\Big\{-C' n \, \Big(\sum_{v=1}^V \varepsilon_{n,v}\Big)^2\Big\}
\end{align*}
where constants $C, C'>0$, $\varepsilon_{n,v} = n^{-\alpha_v/(2\alpha_v+d_v)}(\log n)^{t_v} + \sqrt{n^{-1} \,d_v\, \log p}$ and $t_v\geq \alpha_v(d_v+1)/(2\alpha_v+d_v)$.

\section*{Acknowledgments}

This work was partially supported by NSF grant DMS-1712870 and DOD grant SOT-FSU-FATs-16-06.

\bibliographystyle{apalike}
\onehalfspacing
\bibliography{mybib.bib}

\includepdf[pages=-]{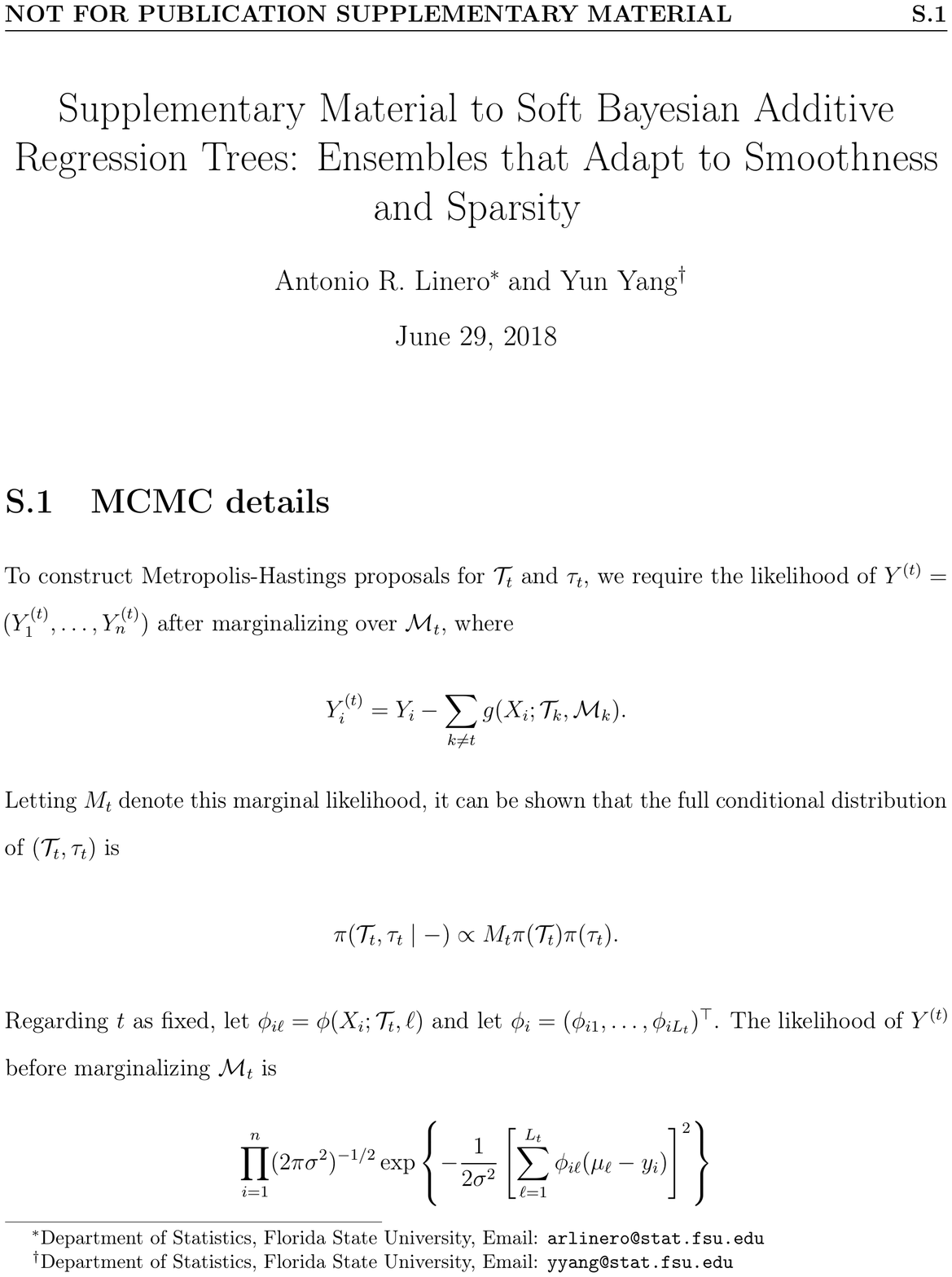}

% \clearpage

% \Blindtext
% \Blindtext

% \clearpage

\end{document}